\documentclass[reqno,a4paper]{amsart}
\usepackage{comment,amssymb,latexsym,upref,enumerate,fouridx}
\usepackage{mathrsfs,color}
\usepackage{centernot}

\usepackage[colorlinks,linkcolor=blue,citecolor=blue]{hyperref}
\allowdisplaybreaks
\numberwithin{equation}{section}

\theoremstyle{plain}
\newtheorem{thm}{Theorem}[section]
\newtheorem{lem}[thm]{Lemma}
\newtheorem{prop}[thm]{Proposition}

\theoremstyle{definition}

\theoremstyle{remark}
\newtheorem{rem}[thm]{Remark}

\newcommand{\ep}{\epsilon}

\setlength{\hoffset}{-19mm}
\setlength{\voffset}{-18mm}

\setlength{\textwidth}{17cm}
\setlength{\textheight}{24cm}%

\setlength{\marginparwidth}{25mm}%


\newcommand{\eqn}[2]{ \begin{equation*} #2  \end{equation*} }

\input Tdef.def

\newcommand{\phib}{\bar{\phi}{}}
\newcommand{\phiv}{\vec{\phi}{}}

\begin{document}

\title[Stabilizing relativistic fluids on spacetimes with non-accelerated expansion]{Stabilizing relativistic fluids on spacetimes with non-accelerated expansion}

\author[D. Fajman]{David Fajman}
\address{Gravitational Physics\\
Faculty of Physics \\
Boltzmanngasse 5\\
University of Vienna, 1090 Vienna\\ Austria}
\email{David.Fajman@univie.ac.at}

\author[T.A. Oliynyk]{Todd A. Oliynyk}
\address{School of Mathematics\\
9 Rainforest Walk\\
Monash University, VIC 3800\\ Australia}
\email{todd.oliynyk@monash.edu}

\author[Z. Wyatt]{Zoe Wyatt}
\address{Maxwell Institute for Mathematical Sciences\\
School of Mathematics\\
University of Edinburgh, EH9 3FD, UK}
\email{zoe.wyatt@ed.ac.uk}

\begin{abstract}
\noindent
We establish global regularity and stability for the irrotational relativistic Euler equations with equation of state $\overline{p}=K\overline{\rho}$, where $0<K<1/3$, for small initial data in the expanding direction of FLRW spacetimes of the form $(\mathbb R\times\mathbb T^3,-d\tb^2+\tb^2\delta_{ij} dx^i dx^j$). This provides the first case of non-dust fluid stabilization by spacetime expansion where the expansion rate is of power law type but non-accelerated. In particular, the time integral of the inverse scale factor diverges as $t\rightarrow\infty$.
\end{abstract}

\maketitle

\section{Introduction\label{intro}}
In this article, we consider the relativistic Euler equations
\begin{align}
\vb^\mu \nablab_\mu \rhob + (\rhob+\pb) \nablab_\mu \vb^\mu &= 0, \label{eul.1} \\
(\rhob+\pb)\vb^\mu \nablab_\mu \vb^\nu + \hb^{\mu\nu}\nablab_\mu \pb &= 0, \label{eul.2}
\end{align}
with a linear equation of state
\begin{equation} \label{eos}
\pb = K \rhob, \quad 0<K<1/3,
\end{equation}
on Milne-like spacetimes of the form
\begin{equation*}
((0,\infty)\times \mathbb T^3,\gb=-d\bar t^2+\bar t^2\delta_{ij}dx^idx^j),
\end{equation*}
where $\nablab_\mu$ is the Levi-Civita connection of $\gb$,  the fluid four-velocity $\vb^\mu$ in normalized according to $\gb_{\mu\nu}\vb^\mu \vb^\nu=-1$, and 
$\hb^{\mu\nu} = \gb^{\mu\nu}+\vb^\mu \vb^\nu$ defines a positive definite inner product on the subspace of the cotangent space that is $\gb$-orthogonal to $\vb_\mu = \gb_{\mu\nu}\vb^\mu$.

We show that the canonical homogeneous solutions \eqref{homsol} to \eqref{eul.1}-\eqref{eul.2} are nonlinearly stable in the expanding direction of spacetime, in the sense that sufficiently small irrotational perturbations of these solutions exist globally towards the future and remain close to the background solutions. In particular, no shocks form in the fluids.

\subsection{Fluid regularization in expanding spacetimes}
We now discuss how the main result of this paper relates to previous work on the relativistic Euler equations. 

It is well known, due to the seminal work of Christodoulou \cite{Christodoulou:2007}, that there exist arbitrary small perturbations of the canonical constant solutions to the relativistic Euler equations with a relatively general equation of state on Minkowski spacetime that form shock singularities in finite time. In particular, these homogeneous fluid solutions are unstable.

In expanding spacetimes shock formation can be suppressed. The standard  models in cosmology representing expanding spacetimes are derived from the FLRW class, which in the case of zero spatial curvature takes the form
\begin{equation}\label{milne-like}
   ((0,\infty)\times \mathbb T^3, -d\tb^2+a(\tb)^2 \delta_{ij}dx^idx^j)\,.
\end{equation}
For such spacetimes there exists a dissipative effect on the fluid induced by the spacetime expansion that leads to the stability of homogeneous fluid solutions and thereby global regularization of the relativistic Euler equations. This effect is referred to as \emph{fluid stabilization}. Fluid stabilization was discovered by Brauer, Rendall and Reula in the Newtonian case in particular for dust $K=0$ \cite{BrauerRendallReula:1994} and rigorously established in the scenario of relativistic self-gravitating fluids in exponentially expanding spacetimes ($a(\tb)=e^{\tb}$) by Rodnianski and Speck in \cite{RodnianskiSpeck:2013}. Their result was later complemented by a series of works by Friedrich \cite{Friedrich:2017}, Oliynyk \cite{Oliynyk:CMP_2016} and Had\v{z}i\'c-Speck\cite{HadzicSpeck:2015}.

While the aforementioned results concern the case of exponentially expanding spacetimes, in \cite{Speck:2013} Speck considered the relativistic Euler equation with linear equation of state on spacetimes with a scale factor $a(\tb)$ that obeys an integrated growth condition of the form
\begin{equation*}
    \int_1^\infty a(\tb)^{-1}d\tb<\infty
\end{equation*}
or a stronger condition on the behaviour of $a(\tb)$ depending on the value of $K\in (0,1/3]$ and proved stability of homogeneous fluid solutions. In addition, he proved that for the particular case of \emph{radiation fluids} ($K=1/3$) shocks do form if the expansion rate fails to obey integrability, i.e.
if
\begin{equation*}
\int_1^\infty a(\tb)^{-1}d\tb =\infty.
\end{equation*}
In particular this implies that radiation fluid shocks would form on our Milne-like model \eqref{milne-like}.
For dust ($K=0$) only $a(\tb)^{-2}$ needs to be integrable.
In Speck's theorem an even faster expansion rate is required to stabilize \emph{massive fluids} $0< K<1/3$ compared to radiation fluids, and so it seems that stability may a priori fail to hold in general for massive non-dust fluids at the threshold where $a(\tb)^{-1}$ fails to be integrable. This threshold is interesting as it also appears independently in the context of stability of solutions to the Einstein equations as we discuss in the following.

\subsection{Localization of the Einstein equations}
In a seminal work on the Einstein-non-linear scalar field system, where the potential of the scalar field emulates a positive cosmological constant and thus generates spacetime expansion at an expontential rate, Ringström demonstrated that exponential expansion leads to a decoupling of regions of spacetime for late times \cite{Ringstrom:2008}. As a consequence, to determine the future asymptotics of solutions to the Einstein equations in a small coordinate neighborhood only the initial data in a slightly larger coordinate neighborhood is required. In this sense, the Einstein equations \textit{localize} in the presence of exponential expansion.

In a follow-up work \cite{Ringstrom:2009}, Ringstr\"om relaxed the rate of spacetime expansion to the class of power law inflation, which in our terminology corresponds to scale factors $a(\tb)=\tb^{q}$, where $q>1$ and showed that for this class the localization property still holds.  We point out that the threshold $q=1$ is precisely the one, where $\int_1^\infty a(\tb)^{-1}d\bar t$ diverges. However, the Milne model, which is a spacetime on that threshold, clearly does not possess this localization property. This is because the Milne model is a quotient of Minkowski spacetime and as a consequence no two regions can causally decouple.
We conclude that at this threshold rate of expansion the causal structure of spacetime changes drastically and this has consequences for the treatment of the Einstein equations. Nevertheless the Milne model is stable as a solution to the Einstein equations \cite{AF17,AM11}. The nature of the proof is however substantially different to the power law inflation scenario since the localization property does not hold.

\subsection{Fluid regularization in non-accelerated spacetimes}
We have now identified the threshold rate $a(\tb)=\tb$ as one between two causally different regimes from the perspective of the Einstein equations. Furthermore the result of Speck \cite{Speck:2013} shows shock formation for radiation fluids in spacetimes with this linear rate of expansion. Thus it may seem reasonable to believe that more general (i.e. non-dust and non-radiation) fluid regularization also fails at this linear rate.

However, in the context of these considerations, the result in our present paper shows that \emph{fluid regularization does occur} for zero-accelerated power law expansion ($\ddot a=0$ or $q=1$) as long as the relation $0<K<1/3$ holds. This implies that the localization property in spacetimes with accelerated expansion (or the integrability of the inverse scale factor) is not the necessary feature of the spacetime that regularizes the fluid for $0<K<1/3$, (in contrast to the case of $K=1/3$ as shown by Speck).

We conclude, that the present result establishes relativistic fluid regularization for the slowest expansion rate in comparison to previous results for the regime $0<K<1/3$. The upper bound on $K$ is sharp by Speck's result. We do not claim that the lower bound on the expansion rate is sharp. A trivial lower bound on scale factors that provide fluid regularization for $0<K<1/3$ is given by Christodoulou's result with $a(\tb)=1$. 

\subsection{Conformal rescaling of the metric and homogeneous fluid solutions}
We consider Milne-like spacetimes of the form $(M,\gb)$ where\footnote{By introducing a change of time coordinate according to the formula $\tb= 1/t$, the metric \eqref{gbdef} can
be brought into the more recognizable form
\begin{equation*}
\gb = -d\tb^2 +\tb^2 \delta_{ij}dx^idx^j,
\end{equation*}
where now $(\tb,x^i)\in [1/T_0,\infty)\times \Tbb^3$. We refer to such metrics as `Milne-like' since the scale factor is the same as in Milne, even though the spatial geometry $(\Tbb^3, \delta)$ is different from $(\mathbb{H}^3, g_{\mathbb{H}^3})$ or quotients thereof, appearing in the standard Milne spacetime. 
}
\begin{gather} 
M = (0,T_0] \times \Tbb^3, \quad T_0 >0, \label{Mdef}
\intertext{and}
\gb = \frac{1}{t^2} \biggl( -\frac{1}{t^2} dt^2  + \delta_{ij}dx^idx^j\biggr )\label{gbdef}
\end{gather}
Here, $t=x^0$ is a time coordinate on the interval $(0,T_0]$,  $(x^i)$, $i=1,2,3$, denotes standard period coordinates on the 3-torus $\Tbb^3$.
In the following,
we will use $\del{\mu}$ to denote the partial derivatives with respect to the coordinates $(x^\mu)=(t,x^i)$ and define
$\del{}^i = \delta^{ij}\del{j}$.
It is important to note that, due to our conventions, the future is located in the direction of \textit{decreasing} $t$ and future timelike infinity is located at $t=0$. 
Consequently, we require that
\begin{equation*} 
\vb^0 < 0
\end{equation*}
in order to ensure that the four-velocity is future directed.

Throughout this article, unless otherwise specified, we will assume that the constant $K$ in the linear equation of state \eqref{eos} satisfies
\begin{equation} \label{Krange}
0< K < 1/3.
\end{equation}
A straightforward calculation then shows that, for every positive constant $c_H>0$, the pair
\begin{equation} \label{homsol}
(\vb^\mu_H,\rhob_H) = \biggl(-\frac{\gb^{\mu\nu}\del{\nu}\phib_H}{\zetab_H}, \zetab_H^{\frac{1+K}{K}}\biggr),
\end{equation}
where
\begin{align}
\phib_H = c_H t^{-\lambda}, \quad \lambda =1-3K,   \label{phibHdef}
\intertext{and}
\zetab_H = \sqrt{-\gb(d\phib_H,d\phib_H)}, \label{zetabHdef}
\end{align}
defines a homogeneous, irrotational solution to the relativistic Euler equations \eqref{eul.1}-\eqref{eul.2} such that the four-velocity is future pointing. Explicitly, this solution reads

\begin{equation*}
    (\bar v^\mu_H,\bar\rho_H)=\left(-t^2\delta_0^\mu,((1-3K)c_H)^{\frac{1+K}K}t^{3(1+K)}\right).
\end{equation*}

The main aim of this article is to establish the future stability of non-linear, irrotational perturbations of these homogeneous solution
on the parameter range \eqref{Krange}. 

\subsection{Fuchsian approach}
To establish the global existence to the future of solutions to  \eqref{eul.1}-\eqref{eul.2} that represent irrotational, non-linear perturbations of the homogeneous solution \eqref{homsol}, we employ the
Fuchsian method that was introduced in \cite{Oliynyk:CMP_2016} and further developed in \cite{BOOS:2019}. This method involves transforming the relativistic Euler equations into a 
Fuchsian symmetric hyperbolic
equation of the form
\begin{equation}\label{fuchsian-class}
B^0(U)\del{t}U + \frac{1}{t}(C^i + B^i(U))\del{i} U  = \frac{1}{t}\Bc(U)\Pbb U + \frac{1}{t}F(U) \qquad \text{in $(0,T_0]\times \Tbb^{3}$,} 
\end{equation}
see Section \ref{extended} for details.
Equations of this type were studied in \cite{BOOS:2019,Oliynyk:CMP_2016} and the existence of solutions with uniform decay as $t\searrow 0$ was established 
under certain assumptions on the system's coefficients and a small initial data assumption. The Fuchsian system that we obtain for the relativistic Euler equations in this article does
not satisfy the assumptions needed to apply the existence theory from \cite{BOOS:2019,Oliynyk:CMP_2016}, as we show in Section \ref{Fuchsian}. In order to establish global existence for \eqref{fuchsian-class} we therefore generalize \cite{BOOS:2019} to this extended class in Theorem \ref{symthm}. The precise statement of the global
existence result can be found in Theorem \ref{Milnethm}.  Interestingly, while we do obtain global existence, we
do not get uniform decay as $t\searrow 0$ as was the case in \cite{BOOS:2019,Oliynyk:CMP_2016}. The obstruction to decay is the singular terms $\frac{1}{t}C^k$, where  the $C^k$, $k=1,2,3$, are constant,
symmetric matrices that are not present in the Fuchsian equations considered in \cite{BOOS:2019,Oliynyk:CMP_2016}. 

\subsection{Outlook}
Since it is known that arbitrarily small perturbations of the homogeneous solution must form shocks in finite time \cite{Speck:2013} when $K=1/3$, our results, in this sense, are sharp. It is open, whether the present result holds for spacetimes that expand slower than the Milne rate $a(\overline t)=\overline t$ for $0<K<1/3$. 
Moreover, we expect that the present result can be generalized to the rotational case. Another potential generalization concerns the inclusion of gravity by coupling the Euler equations to the Einstein equations in the framework similar to \cite{AF17,FW19}.

\subsection{Organization of the paper}
In the following section we introduce notations and setup. In Section \ref{irrot} we bring the irrotational Euler equations in the required Fuchsian form and apply the Fuchsian global existence theorem to obtain the main theorem, Theorem \ref{Milnethm}. In Section \ref{Fuchsian} the global existence theorem for Fuchsian systems of type \eqref{fuchsian-class} is proven. The appendix contains fundamental lemmas on standard functional inequalities that are used in the paper and provided here for convenience.

\section{Preliminaries\label{prelim}}
Before proceeding, we first fix our notation and introduce a few definitions that will be used throughout the article.

\subsection{Coordinates, indexing and derivatives} 
Except for Section \ref{Fuchsian}, we will use lower case Greek letters, e.g. $\alpha,\beta,\gamma$, to label spacetime coordinates indices
that run from $0$ to $3$, while we will reserve lower case Latin letters, e.g. $i,j,k$, to label spatial coordinate indices that run from $1$ to $3$.
In the appendices, we will consider general spatial dimensions, and so there, lower case Latin indices, e.g. $i,j,k$, will run from $1$ to $n$
and will be used to index spatial coordinate indices. Furthermore, $x=(x^i)$ will again denote the standard periodic coordinates, this time on
on the $n$-torus $\Tbb^n$, and $t$ will denote a time coordinate on intervals of $\Rbb$. Partial derivatives with respect to these coordinates will be denoted
by $\del{i}$ and $\del{t}$, respectively.
Additionally, in Section \ref{Fuchsian}, we use lower case Greek letters to denote multi-indices, e.g.
$\alpha = (\alpha_1,\alpha_2,\ldots,\alpha_n)\in \Zbb_{\geq 0}^n$, and we will employ the standard notation $D^\alpha = \del{1}^{\alpha_1} \del{2}^{\alpha_2}\cdots
\del{n}^{\alpha_n}$ for spatial partial derivatives and
$Du=(\del{j}u)$ for the spatial gradient. It will be clear from context whether a Greek index is meant to be interpreted as a spacetime index or a multi-index.

\subsection{Inner-products and matrix inequalities}
Throughout this article, we use
\begin{equation*}
\ipe{\xi}{\zeta} = \xi^T \zeta, \qquad \xi,\zeta \in \Rbb^N,
\end{equation*}
to denote the Euclidean inner-product and
\begin{equation*}
|\xi| = \sqrt{\ipe{\xi}{\xi}}
\end{equation*}
to denote the Euclidean norm.
Moreover, given matrices $A,B\in \Mbb{N}$, we define
\begin{equation*}
A \leq B \quad \Longleftrightarrow \quad \ipe{\xi}{A\xi} \leq \ipe{\xi}{B\xi}, \quad \forall \: \xi \in \Rbb^N,
\end{equation*}
and we use
\begin{equation*}
|A|_{\op} = \sup_{|\xi|=1}|A\xi|
\end{equation*}
to denote the operator norm of $A$.

\subsection{Sobolev spaces}
The $W^{k,p}$, $k\in \Zbb_{\geq 0}$, norm of a map $u\in C^\infty(\Tbb^N,\Rbb^N)$ is defined by
\begin{equation*}
\norm{u}_{W^{k,p}} = \begin{cases} \begin{displaystyle}\biggl( \sum_{0\leq |\alpha|\leq k} \int_{\Tbb^n} |D^{\alpha} u|^p \, d^n x\biggl)^{\frac{1}{p}}  \end{displaystyle} & \text{if $1\leq p < \infty $} \\
 \begin{displaystyle} \max_{0\leq \ell \leq k}\sup_{x\in \Tbb^n}|D^{\ell} u(x)|  \end{displaystyle} & \text{if $p=\infty$}
\end{cases}.
\end{equation*}
The Sobolev space $W^{k,p}(\Tbb^n,\Rbb^N)$ is then defined as the completion of $C^\infty(\Tbb^N,\Rbb^N)$ with respect to the norm
$\norm{\cdot}_{W^{k,p}}$. When $N=1$, we will write $W^{k,p}(\Tbb^n)$ instead, and we will employ the standard
notation $L^p(\Tbb^n)=W^{0,p}(\Tbb^n)$.
Furthermore, when $p=2$, we set $H^k(\Tbb^n,\Rbb^N)=W^{k,2}(\Tbb^n,\Rbb^N)$, and we 
use $\ip{\cdot}{\cdot}$ to denote that $L^2$ inner-product on $\Tbb^n$, that is,
\begin{equation*}
 \ip{u}{v} = \int_{\Tbb^n} \ipe{u}{v} \, d^n x.
\end{equation*}

\subsection{Constants and inequalities}
We use the standard notation
\eqn{asimb}{
a \lesssim b
}
for inequalities of the form
\eqn{aleqb}{
a \leq Cb
}
in situations where the precise value or dependence on other quantities of the constant $C$ is not required.
On the other hand, when the dependence of the constant on other inequalities needs to be specified, for
example if the constant depends on the norm $\norm{u}_{L^\infty}$, we use the notation
\eqn{cdep}{
C=C(\norm{u}_{L^\infty}).
}
Constants of this type will always be non-negative, non-decreasing, continuous functions of their arguments.

\section{Irrotational Euler equations\label{irrot}}
It is well known, see \cite[\S 3.1]{RodnianskiSpeck:2013} for example, that the irrotational relativistic Euler equations \eqref{eul.1}-\eqref{eul.2} can be formulated as a non-linear wave equation. In particular,
for the linear equation of state \eqref{eos}, the irrotational relativistic Euler equations are given by
\begin{equation} \label{awaveA}
\ab^{\alpha\beta}\nablab_\alpha \nablab_\beta \phib = 0
\end{equation}
where
\begin{equation}\label{acoustic}
\ab^{\alpha\beta} = \gb^{\alpha\beta} - \sigma \vb^\alpha \vb^\beta, \quad \sigma = \begin{textstyle}\frac{1-K}{K}\end{textstyle},
\end{equation}
is the \textit{acoustic metric}. The four-velocity is determined by
\begin{equation} \label{vbform}
\vb^\alpha = -\frac{\gb^{\alpha\beta} \del{\beta}\phib}{\zetab}
\end{equation}
where 
\begin{equation} \label{zetabdef}
\zetab = \sqrt{-\gb(d\phib,d\phib)}
\end{equation}
is the \textit{fluid enthalpy}. In this formulation, the proper energy density of the fluid can be recovered from the enthalpy via the formula
\begin{equation} \label{rhobform}
\rhob = \zetab^{\frac{1+K}{K}}.
\end{equation}

\subsection{Rescaled fluid potential}
The first step in transforming the irrotational relativistic Euler equations, given by \eqref{awaveA}, into Fuchsian form involves introducing a rescaled fluid potential via
\begin{equation} \label{phidef}
\phi = t^\lambda \phib.
\end{equation}
We then see after a straightforward calculation involving  \eqref{gbdef} and \eqref{acoustic}-\eqref{zetabdef} that the acoustic wave equation \eqref{awaveA}, when expressed in terms of $\phi$, becomes
\begin{equation} \label{awaveB}
\alpha^{00}t \del{t}(t\del{t} \phi) + 2 \alpha^{0i}t\del{t}\del{i}\phi + \alpha^{ij}\del{i}\del{j}\phi +\beta^0 t\del{t}\phi + \beta^i \del{i}\phi
+ \gamma \phi = 0
\end{equation}
where
\begin{align}
\alpha^{00} &= -1 -\frac{\sigma}{\mu}(t\del{t}\phi-\lambda \phi)^2, \label{alpha00def} \\
\alpha^{0i} & = \frac{\sigma}{\mu}(t\del{t}\phi-\lambda \phi)\del{}^i \phi, \label{alpha0jdef} \\
\alpha^{ij} & = \delta^{ij} - \frac{\sigma}{\mu}\del{}^i\phi \del{}^j \phi, \label{alphaijdef}\\
\beta^0 &= \frac{(\sigma+1)\lambda (t\del{t}\phi-\lambda \phi)^2+\lambda(\sigma-1)|D\phi|^2}{\mu}, \quad (|D\phi|^2 = \delta^{ij}\del{i}\phi\del{j}\phi), \label{beta0def} \\
\beta^i & = -\frac{2\lambda\sigma}{\mu}(t\del{t}\phi-\lambda \phi)\del{}^i \phi, \label{betaidef}\\
\gamma &= -\frac{\lambda\sigma}{\mu} |D\phi|^2 \label{gammadef}
\intertext{and}
\mu &= (t\del{t}\phi-\lambda \phi)^2-|D\phi|^2. \label{mudef}
\end{align}
For use below, we note that  $\alpha^{00}$ can be written as
\begin{equation} \label{alpha00form}
\alpha^{00} = -(1+\sigma)+ \frac{\sigma}{\mu}|D\phi|^2
\end{equation}
and that, under the rescaling \eqref{phidef}, the homogeneous solution \eqref{phibHdef} is transformed into the constant solution
\begin{equation} \label{phiHdef}
\phi_H = c_H, \quad c_H\in \Rbb_{>0},
\end{equation}
of \eqref{awaveB}.

\subsection{First order formulation}
The next step in the transformation of the irrotational relativistic Euler equations into Fuchsian form involves expressing the wave equation \eqref{awaveB} in first order form by introducing the variables
\begin{align}
\phi^{\ell}_0 &= t\del{t}\phi - \ell \delta\phi, \quad \ell \in \Rbb, \label{phi0pdef} \\
\phi_i &= \del{i}\phi \label{phiidef}
\intertext{and}
\delta\phi &= \phi-c_H. \label{deltaphidef}
\end{align}
A short calculation show that in terms of these variables, the wave equation \eqref{awaveB} is given by
\begin{align*}
-\alpha^{00} \del{t}\phi_0^\ell - \frac{2}{t} \alpha^{0i}\del{i}\phi_0^\ell -\frac{1}{t}\alpha^{ij}\del{i}\phi_j
&= \frac{1}{t}\bigl((\beta^0+\ell\alpha^{00})(\phi^\ell_0 +\ell \delta \phi)+ (\beta^i+2\ell \alpha^{0i})\phi_i +\gamma (\delta\phi+c_H) \bigr), 
\\
\alpha^{ij} \del{t}\phi_j -\frac{1}{t}\alpha^{ij}\del{j}\phi_0^\ell &= \frac{\ell}{t}\alpha^{ij}\phi_j. 
\end{align*}
We now collect together two versions of this system obtained from setting $\ell=0$ and $\ell =\lambda$. This, with the help of the identity
$\del{j}\phi^\lambda_0 = \del{j}\phi^0_0 - \lambda \phi_j$,
gives
rise to the following system
\begin{align}
-\alpha^{00} \del{t}\phi_0^0 - \frac{2}{t} \alpha^{0i}\del{i}\phi_0^\lambda -\frac{1}{t}\alpha^{ij}\del{i}\phi_j
&= \frac{1}{t}\bigl(\beta^0\phi^0_0+ (\beta^i+2\lambda\alpha^{0i})\phi_i + \gamma (\delta\phi+c_H) \bigr), \label{awaveD.1}\\
\alpha^{ij} \del{t}\phi_j -\frac{1}{t}\alpha^{ij}\del{j}\phi_0^0 &= 0, \label{awaveD.2}\\
-\alpha^{00}\del{t}\phi_0^\lambda - \frac{2}{t} \alpha^{0i}\del{i}\phi_0^0 -\frac{1}{t}\alpha^{ij}\del{i}\phi_j
&= \frac{1}{t}\bigl((\beta^0+\lambda\alpha^{00})(\phi^\lambda_0 +\lambda \delta \phi)+ \beta^i\phi_i + \gamma (\delta\phi+c_H) \bigr), \label{awaveD.3}\\
\alpha^{ij} \del{t}\phi_j -\frac{1}{t}\alpha^{ij}\del{j}\phi_0^\lambda &= \frac{\lambda}{t}\alpha^{ij}\phi_j, \label{awaveD.4}
\end{align}
where $\delta\phi$ can be recovered from $\phi^0_0$ and $\phi^\lambda_0$ via the formula
\begin{equation} \label{deltaphiform}
\delta\phi = \frac{\phi^0_0-\phi^\lambda_0}{\lambda}.
\end{equation}

Recalling that $\lambda = 1-3K$ and $\sigma = \frac{1-K}{K}$,
a short calculation using \eqref{alpha00def}, \eqref{beta0def}, \eqref{mudef} and \eqref{phiidef}  shows that
\begin{equation} \label{beta0+lambdaalpha00}
\beta^0+\lambda\alpha^{00} =-\frac{\lambda\sigma}{\mu}|\phiv|^2, \quad |\phiv|^2 = \delta^{ij}\phi_i \phi_j.
\end{equation} 
Setting
\begin{equation}\label{udef}
u = \begin{pmatrix} \phi^0_0 & \phi_j & \phi^\lambda_0 & \phi_j\end{pmatrix}^{\tr},
\end{equation}
we can then write the system \eqref{awaveD.1}-\eqref{awaveD.4} in matrix form as
\begin{equation} \label{awaveE}
A^0(u)\del{t}u + \frac{1}{t}(\Cc^k + A^k(u))\del{k}u = \frac{1}{t} \Ac(u) \Pi u+ \frac{1}{t}F(u)
\end{equation}
where
\begin{align}
A^0(u)&=\begin{pmatrix}-\alpha^{00} & 0 & 0& 0\\
0 & \alpha^{ij} & 0 & 0 \\
0 & 0 & -\alpha^{00} & 0\\
0 & 0 & 0 & \alpha^{ij} \end{pmatrix}, \label{A0def}\\
\Cc^k &= \begin{pmatrix}
0 & -\delta^{kj} & 0 & 0\\
-\delta^{ik} & 0 & 0& 0\\
0 & 0& 0& -\delta^{kj}\\
0 & 0&-\delta^{ik} & 0
\end{pmatrix}, \label{Cckdef} \\
A^k(u) &= \begin{pmatrix}
0 & \delta^{kj}-\alpha^{kj} & - 2\alpha^{0k} & 0\\
\delta^{ik}-\alpha^{ik} & 0 & 0& 0\\
- 2\alpha^{0k} & 0& 0& \delta^{kj}-\alpha^{kj}\\
0 & 0&\delta^{ik}-\alpha^{ik} & 0
\end{pmatrix}, \label{Akdef} \\
\Ac(u) &= \begin{pmatrix}\beta^0 & 0 & 0 & 0\\
0 &  \lambda \delta^{ij} & 0 &0 \\
0 & 0& (1+\sigma)\lambda & 0\\
0 & 0 & 0 & \lambda \alpha^{ij}
\end{pmatrix}, \label{Acdef} \\
\Pi &=\begin{pmatrix}1 & 0 & 0& 0\\
0 & 0 & 0 & 0 \\
0 & 0 & 0 & 0\\
0 & 0 & 0 & \delta^i_j \end{pmatrix} \label{Pidef}
\intertext{and}
F&= \begin{pmatrix}
(\beta^i+2\lambda\alpha^{0j})\phi_j +  \gamma (\delta\phi+c_H) \\
0 \\
(\beta^0+\lambda\alpha^{00})(\phi^\lambda_0 +\lambda \delta \phi)+ \beta^i\phi_i + \gamma (\delta\phi+c_H)\\
0
\end{pmatrix}. \label{Fdef}
\end{align}

\subsection{Coefficient properties}
It is not difficult to verify, with the help of the formulas \eqref{alpha00def}-\eqref{alpha00form}, \eqref{phi0pdef}-\eqref{deltaphidef}, \eqref{deltaphiform}
and \eqref{beta0+lambdaalpha00}, that the coefficients of \eqref{awaveE} satisfy the following:
\begin{enumerate}[(i)]
\item There exists a $R_0>0$ such that  the matrices  $A^0(u)$, $A^k(u)$ and $\Ac(u)$, and the source term $F(u)$ are smooth in $u$ for $u\in B_{R_0}(\Rbb^8)$.
Moreover, the matrices $A^0(u)$,  $A^k(u)$, $C^k$ and $\Pi$ are all symmetric and $\Pi$ defines a projection operator, that is,
\begin{equation} \label{Pi2=Pi}
\Pi^2 = \Pi.
\end{equation}

\begin{rem} In the following, we will always be able
to assume that $(u,Du)\in B_{R}(\Rbb^8\times \Rbb^{3\times 8})$ for any $R \in (0,R_0]$ since ultimately we will establish $L^\infty$ bounds on both $u$ and $Du$. 
Moreover, any of the implied constants in the $\lesssim$ signs will depend on $R_0$, which we take
to be fixed, but will be independent of $R \in (0,R_0]$. 
\end{rem}

\item The matrices $A^0(0)$ and $\Ac(u)$ satisfy 
\begin{equation} \label{A0Acprop1}
[\Pi,A^0(u)]= [\Pi,\Ac(u)] = 0
\end{equation}
and
\begin{equation}  \label{A0Acprop2}
\del{k}(\Pi^\perp \Ac(u)) = 0
\end{equation}
where
\begin{equation} \label{Piperpdef}
\Pi^\perp = \id - \Pi.
\end{equation}
Furthermore,
\begin{equation} \label{A0Acprop3}
|A^0(u)-A^0(0)|_{\op}+|\Ac(u)-\Ac(0)|_{\op} \lesssim |\phiv|^2
\end{equation}
where 
\begin{equation}
A^0(0)=\begin{pmatrix}1+\sigma & 0 & 0& 0\\
0 & \delta^{ij} & 0 & 0 \\
0 & 0 & 1+\sigma & 0\\
0 & 0 & 0 & \delta^{ij}\end{pmatrix} \AND \Ac(0)=\lambda A^0(0). \label{A0Acprop4}
\end{equation}
We also note by
\eqref{A0Acprop1} that 
\begin{equation*}
\Pi A^0(u)\Pi^\perp = \Pi^\perp A^0(u)\Pi =0.
\end{equation*}

\item The matrices $\Cc^k$, $A^k(u)$ and the source term $F(u)$ satisfy
\begin{align} 
\Pi^\perp \Cc^k \Pi^\perp &= 0, \label{Cckprop} \\
\Pi^\perp A^k(u)\Pi^\perp &=0, \label{Akprop1}\\
|\Pi^\perp A^k(u)\Pi|_{\op} &\lesssim |\phiv|,  \label{Akprop2}\\
|\Pi A^k(u)\Pi|_{\op} &\lesssim |\phiv|  \label{Akprop3}
\intertext{and}
|F(u)| &\lesssim |\phiv|^2. \label{Fprop}
\end{align}
\end{enumerate}

Next, using
\begin{equation*}
\del{t}\phi_j = \frac{1}{t}\del{j}\phi^0_0
\end{equation*}
and noting the time derivatives of $\phi^\lambda_0$ and $\phi^0_0$ can be computed from \eqref{awaveE},  it is not difficult to verify, with the help of
\eqref{A0Acprop1}-\eqref{A0Acprop3}
and \eqref{Cckprop}-\eqref{Fprop}, that
\begin{gather}
\biggl|\Pi^\perp (\del{t}(A^0(u))+\frac{1}{t}\del{k}(A^k(u))\Pi^\perp\biggr|_{\op} \lesssim  \frac{1}{t}\bigl(|\phi^0_0|^2+ |\phiv|^2 +|D\phi^0_0|^2+|D\phiv|^2\bigr) ,\label{divA1}\\
\biggl|\Pi^\perp (\del{t}(A^0(u))+\frac{1}{t}\del{k}(A^k(u))\Pi\biggr|_{\op} +\biggl|\Pi(\del{t}(A^0(u))+\frac{1}{t}\del{k}(A^k(u))\Pi^\perp\biggr|_{\op}\lesssim \frac{1}{t}\bigl(|\phi^0_0|+ |\phiv| +|D\phi^0_0|+|D\phiv|\bigr) \label{divA2}
\intertext{and}
\biggl|\Pi(\del{t}(A^0(u))+\frac{1}{t}\del{k}(A^k(u))\Pi\biggr|_{\op} \lesssim \frac{1}{t}\bigl(|\phi^0_0|+ |\phiv| +|D\phi^0_0|+|D\phiv|\bigr) \label{divA3}.
\end{gather}
Additionally, it is also clear from \eqref{A0Acprop3} and \eqref{Akprop1}-\eqref{Akprop3} that
\begin{equation}\label{OrdA} 
|\del{j}(A^0(u))|_{\op} +|\del{j}(\Ac(u))|_{\op}+|\del{j}(F(u))| \lesssim |\phiv|^2+|D\phiv|^2 
\end{equation}
and 
\begin{equation}\label{OrdB}
|\Pi^\perp \del{j}(A^k(u))\Pi |_{\op}+|\Pi\del{j}(A^k(u))\Pi^\perp|_{\op}+|\Pi \del{j}(A^k(u))\Pi |_{\op} \lesssim |\phiv|+|D\phiv|. 
\end{equation} 

\subsection{The extended system\label{extended}}
As it stands, the system \eqref{awaveE} is almost, but not quite, in the Fuchsian form that we require in order to apply the existence theory developed in Section \ref{Fuchsian}. To bring
it into the required form, we apply the differential operator $A^0 \del{l}(A^0)^{-1}$ to get
\begin{equation} \label{awaveF}
A^0(u) \del{t}\del{l}u +\frac{1}{t}(\Cc^k + A^k(u))\del{k}\del{l}u=\frac{1}{t}\Ac \Pi \del{l}u + \frac{1}{t}G_l(u,Du)
\end{equation} 
where
\begin{align} 
G_l(u,Du) = A^0(u)\Bigl[-\Bigl(-\del{l}&(A^0(u))^{-1}(\Cc^k+A^k(u))+(A^0(u))^{-1}\del{l}(A^k(u))\Bigr)\del{k}u
\notag \\
&+\del{l}\bigl( (A^0(u))^{-1}\Ac(u)\bigr)\Pi u + \del{l}\bigl((A^0(u))^{-1}F(u)\bigr)\Bigr]. \label{Gldef}
\end{align}
Since all the coefficients $A^0(u)$, $A^k(u)$, $\Ac(u)$ and $F(u)$ are smooth in $u$ for $u\in B_{R_0}(\Rbb^8)$, it is clear that 
$G_l(u,v)$ is smooth in $(u,v)$ for $(u,v)\in B_{R_0}(\Rbb^8)\times \Rbb^{3\times 8}$. 

Multiplying \eqref{Gldef} by $\Pi^\perp$, we obtain
\begin{align} 
\Pi^\perp G_l &= A^0\Bigl[-\Bigl(-\del{l}(A^0)^{-1}(\Pi^\perp\Cc^k+\Pi^\perp A^k)+(A^0)^{-1}\del{l}(\Pi^\perp A^k)\Bigr)\del{k}u
\notag \\
&\qquad +\del{l}\bigl( (A^0)^{-1}\Ac\bigr)\Pi^\perp \Pi u + \Pi^\perp\del{l}\bigl((A^0)^{-1}F\bigr)\Bigr] && \text{(by \eqref{A0Acprop1} \& \eqref{Piperpdef})} \notag \\
&=A^0\Bigl[-\Bigl(-\del{l}(A^0)^{-1}(\Pi^\perp\Cc^k+\Pi^\perp A^k)+(A^0)^{-1}\del{l}(\Pi^\perp A^k\Pi^\perp)
\notag \\
&\qquad  +(A^0)^{-1}\del{l}(\Pi^\perp A^k\Pi)\Bigr)\del{k}u + \Pi^\perp\del{l}\bigl((A^0)^{-1}F\bigr)\Bigr]
&& \text{(by \eqref{Piperpdef} and $\Pi^\perp \Pi=0$)}  \notag \\
& =  A^0\Bigl[-\Bigl(-\del{l}(A^0)^{-1}(\Pi^\perp\Cc^k+\Pi^\perp A^k)\del{k}u +(A^0)^{-1}\del{l}(\Pi^\perp A^k\Pi)\del{k}\Pi u\Bigr)  \notag \\
&\qquad+ \Pi^\perp\del{l}\bigl((A^0)^{-1}F\bigr)\Bigr]. && \text{(by \eqref{Pi2=Pi} and \eqref{Akprop1})} \label{PiperpGl}
\end{align}
Since we are assuming that
 $(u,Du)\in B_R(\Rbb^8\times \Rbb^{3\times 8})$, $R\in (0,R_0]$, we see from \eqref{udef}, \eqref{Pidef}, \eqref{OrdA}, \eqref{OrdB}, \eqref{Gldef} and \eqref{PiperpGl} that
\begin{align} 
|\Pi^\perp G_l(u,Du)| &\lesssim |\phiv|^2+|D\phiv|^2+|D\phi^0_0|^2 \label{OrdC}
\intertext{and}
|\Pi G_l(u,Du)| & \lesssim |Du|(|\phiv|+|D\phiv|). \label{OrdD}
\end{align}

The \textit{extended system} is then defined by combining \eqref{awaveE} and \eqref{awaveF} into the following system
\begin{equation} \label{awaveG}
B^0(U) \del{t}U +\frac{1}{t}(C^k + B^k(U))\del{k}U=\frac{1}{t}\Bc \Pbb U + \frac{1}{t}H(U)
\end{equation} 
where
\begin{align}
U &= \begin{pmatrix} u \\ Du \end{pmatrix}, \label{Udef}\\
B^0(U) &= \begin{pmatrix} A^0(u) & 0 \\ 0 & A^0(u) \end{pmatrix}, \label{B0def} \\
C^k &=  \begin{pmatrix} \Cc^k & 0 \\ 0 & \Cc^k \end{pmatrix}, \label{Ckdef}\\
B^k(U) &= \begin{pmatrix} A^k(u) & 0 \\ 0 & A^k(u) \end{pmatrix}, \label{Bkdef} \\
\Bc(U) &= \begin{pmatrix} \Ac(u) & 0 \\ 0 & \Ac(u) \end{pmatrix}, \label{Bcdef} \\
\Pbb &= \begin{pmatrix} \Pi & 0 \\ 0 & \Pi \end{pmatrix},\label{Pbbdef}\\
H(U) & = \begin{pmatrix} F(u) \\ G(u,Du) \end{pmatrix}\label{Hdef}
\end{align}
and we have set  
\begin{equation*}
G(u,Du) = \bigl(G_l(u,Du)\bigr).
\end{equation*}

The point of the extended system is that it is now in Fuchsian form and its coefficients satisfy the assumptions needed apply the existence theory developed in Section \ref{Fuchsian}. To see that the coefficients do in fact satisfy the required assumptions, we observe, with the help
of \eqref{udef}, \eqref{Pidef}, \eqref{A0Acprop4} and \eqref{B0def}-\eqref{Hdef}, that \eqref{A0Acprop3}, \eqref{Akprop1}-\eqref{Fprop}, \eqref{divA1}-\eqref{divA3}, 
and \eqref{OrdC}-\eqref{OrdD}
imply that
\begin{gather}
\biggl|\Pbb^\perp \Bigl(\del{t}(B^0(U))+\frac{1}{t}\del{k}(B^k(U))\Bigr)\Pbb^\perp\biggr|_{\op} \lesssim  \frac{1}{t}|\Pbb U|^2 ,\label{divB1}\\
\biggl|\Pbb^\perp \Bigl(\del{t}(B^0(U))+\frac{1}{t}\del{k}(B^k(U))\Bigr)\Pbb\biggr|_{\op}+\biggl|\Pbb\Bigl(\del{t}(B^0(U))+\frac{1}{t}\del{k}(B^k(U))\Bigr)\Pbb^\perp\biggr|_{\op} \lesssim \frac{1}{t}|\Pbb U|, \label{divB2}\\
\biggl|\Pbb\Bigl(\del{t}(B^0(U))+\frac{1}{t}\del{k}(B^k(U))\Bigr)\Pbb\biggr|_{\op} \lesssim \frac{1}{t}|\Pbb U|, \label{divB3}\\
\Pbb^\perp B^k(U)\Pbb^\perp =0, \label{Bkbnd1}\\
|\Pbb^\perp B^k(U)\Pbb|_{\op}+|\Pbb B^k(U)\Pbb^\perp|_{\op} \lesssim |\Pbb U|,  \label{Bkbnd2}\\
|\Pbb B^k(U)\Pbb|_{\op} \lesssim |\Pbb U|,  \label{Bkbnd3}\\
|\Pbb^\perp H(U)| \lesssim |\Pbb U|^2, \label{Hbnd1}\\
|\Pbb H(U)|  \lesssim |U||\Pbb U|, \label{Hbnd2}\\
|B^0(u)-B^0(0)|_{\op}+|\Bc(u)-\Bc(0)|_{\op} \lesssim |\Pbb U|^2 \label{Bbnds1}
\intertext{and}
\Pbb B^0(u)\Pbb^\perp=\Pbb^\perp B^0(u)\Pbb = 0,
\label{Bbnds1a}
\end{gather} 
where
\begin{equation} \label{Bbnds2}
B^0(0) = \begin{pmatrix} A^0(0) & 0 \\ 0 & A^0(0) \end{pmatrix} \AND \Bc(0) = \lambda B^0(0).
\end{equation}
and
\begin{equation} \label{Pbbperpdef}
\Pbb^\perp = \id -\Pbb.
\end{equation}
We see also from \eqref{Pi2=Pi}-\eqref{A0Acprop2} and \eqref{Bcdef}-\eqref{Pbbdef} that 
\begin{equation} \label{PbbBccom}
[\Pbb,\Bc(U)] = 0,
\end{equation}
\begin{equation} \label{PbbperpBc}
\del{k}(\Pbb^\perp \Bc(U))=0
\end{equation}
and 
\begin{equation} \label{Pbb2=Pbb}
\Pbb^2=\Pbb.
\end{equation}

\subsection{Global existence}
We are now ready to state and prove the main result of this article that guarantees the future stability of irrotational, non-linear perturbations of the homogeneous
solutions \eqref{homsol} to the relativistic Euler equations \eqref{eul.1}-\eqref{eul.2} on Milne-like spacetimes of the form \eqref{Mdef}-\eqref{gbdef}.

\begin{thm} \label{Milnethm}
Suppose $T_0>0$, $k \in \Zbb_{>3/2+2}$, $0<K<1/3$, $c_H>0$ and 
\begin{equation*}
(\phit_0,\phit_1)\in H^{k+2}(\Tbb^{3})\times H^{k+1}(\Tbb^3).
\end{equation*} 
Then there exists a $\delta>0$ such that if
\begin{equation*}
\norm{\phit_0-c_H}_{H^{k+2}} + T_0\norm{\phit_1}_{H^{k+1}} \leq \delta,
\end{equation*}
then there exists a 
\begin{equation*}
\phi \in \bigcap_{\ell=0}^{k+2} C^\ell\bigl((0,T_0],H^{k+2-\ell}(\Tbb^{3})\bigr)
\end{equation*}
such that $\phib= t^{3K-1}\phi$ defines a unique classical solution of the wave equation \eqref{awaveA} on $(0,T_0]\times \Tbb^3$ satisfying the initial conditions
\begin{equation*}
(\phib|_{t=T_0}, \del{t}\phib|_{t=T_0}) = \bigl(T_0^{3K-1}\phit_0,T_0^{3K-2}(T_0\phit_1 +(3K-1)\phit_0)\bigr),
\end{equation*}
which by \eqref{vbform} and \eqref{rhobform}, determines a (unique) irrotational solution of the relativistic Euler equations \eqref{eul.1}-\eqref{eul.2} on $(0,T_0]\times \Tbb^3$.
Moreover, $\phi$ is bounded by
\begin{equation*}
\norm{\phi(t)-c_H}_{H^{k+2}}^2 +\norm{t\del{t}\phi(t)}_{H^{k+1}}^2+ \int_{t}^{T_0} \frac{1}{\tau}\Bigl(\norm{\tau\del{\tau}\phi(\tau)}_{H^{k+1}}^2+ 
\norm{D\phi(\tau)}_{H^{k+1}}^2\Bigr)\, d\tau   \lesssim  \bigl(
\norm{\phit_0-c_H}_{H^{k+2}}^2+\norm{\phit_1}_{H^{k+1}}^2
\bigr)
\end{equation*}
for  all $t\in (0,T_0]$.
\end{thm}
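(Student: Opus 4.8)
The strategy is to solve the extended Fuchsian system \eqref{awaveG} for small data by invoking the global existence theorem for such systems, Theorem \ref{symthm}, and then to reconstruct from the resulting $U$ a classical solution $\phi$ of the rescaled wave equation \eqref{awaveB}; the field $\phib=t^{3K-1}\phi$ then solves \eqref{awaveA}, and \eqref{vbform}--\eqref{rhobform} furnish the corresponding irrotational solution of \eqref{eul.1}--\eqref{eul.2}. Since the homogeneous background \eqref{phiHdef} is the trivial solution $U\equiv 0$ of \eqref{awaveG}, this is a small-data global stability statement about $U=0$. Concretely, given $(\phit_0,\phit_1)$ I would set $\phi|_{t=T_0}=\phit_0$ and $t\del{t}\phi|_{t=T_0}=T_0\phit_1$ and define the corresponding initial datum $U_0=(u_0,Du_0)^{\tr}\in H^{k}(\Tbb^3,\Rbb^{32})$ through \eqref{phi0pdef}--\eqref{deltaphidef}, \eqref{udef} and \eqref{Udef}; directly from these definitions $U_0\in H^{k}$ and
\begin{equation*}
\norm{U_0}_{H^{k}}\ \lesssim\ \norm{\phit_0-c_H}_{H^{k+2}}+T_0\norm{\phit_1}_{H^{k+1}},
\end{equation*}
the $T_0$-weight on $\phit_1$ entering because $\phi^0_0=t\del{t}\phi$. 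One then verifies that \eqref{awaveG} meets every hypothesis of Theorem \ref{symthm}: $B^0$, $B^k$, $C^k$ and $\Pbb$ are symmetric, $B^0(U)$ is positive definite near $U=0$ by \eqref{Bbnds1}--\eqref{Bbnds2}, $\Pbb$ is a projection \eqref{Pbb2=Pbb}, $\Bc$ commutes with $\Pbb$ and has the divergence structure \eqref{PbbBccom}--\eqref{PbbperpBc}, the damping normalisation $\Bc(0)=\lambda B^0(0)$ holds with $\lambda=1-3K>0$ by \eqref{Bbnds2}, and the size and order conditions on the singular and nonlinear terms are exactly \eqref{divB1}--\eqref{Hbnd2} together with \eqref{Bbnds1}--\eqref{Bbnds1a}.

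Applying Theorem \ref{symthm} to \eqref{awaveG} with the datum $U_0$ at $t=T_0$ produces a $\delta>0$ such that whenever $\norm{U_0}_{H^{k}}\le\delta$ there exists a unique $U\in\bigcap_{\ell=0}^{k}C^\ell\bigl((0,T_0],H^{k-\ell}(\Tbb^3,\Rbb^{32})\bigr)$ solving \eqref{awaveG} on $(0,T_0]\times\Tbb^3$ with $U|_{t=T_0}=U_0$ and obeying
\begin{equation*}
\norm{U(t)}_{H^{k}}^2+\int_t^{T_0}\frac1\tau\norm{\Pbb U(\tau)}_{H^{k}}^2\,d\tau\ \lesssim\ \norm{U_0}_{H^{k}}^2,\qquad t\in(0,T_0].
\end{equation*}
Were Theorem \ref{symthm} not available I would derive this from the weighted energy $E(t)^2=\sum_{|\alpha|\le k}\ip{D^\alpha U}{B^0(U)D^\alpha U}$: differentiating in $t$ and substituting the equation for $\del{t}D^\alpha U$, the term $\tfrac1t C^k\del{k}D^\alpha U$ integrates to zero over $\Tbb^3$ because $C^k$ is constant and symmetric; the term $\tfrac1t\Bc\Pbb D^\alpha U$ produces, by \eqref{PbbBccom} and $\Bc(0)=\lambda B^0(0)$, the definite-sign quantity $\tfrac\lambda t\ip{\Pbb D^\alpha U}{B^0(0)\Pbb D^\alpha U}$ whose integral in $\tau$ is the dissipative term above; and the remaining singular pieces — those coming from $\del{t}B^0$ and $\tfrac1t\del{k}B^k$, from the off-diagonal blocks of $B^k$, from the nonlinearity $H$, and from the Moser commutators (the $C^k$ ones vanishing since $C^k$ is constant) — are, by \eqref{divB1}--\eqref{Hbnd2} and the Sobolev embeddings available when $k>3/2+2$, each bounded by $E$ times the dissipative integrand $\tfrac1\tau\norm{\Pbb U(\tau)}_{H^{k}}^2$, hence absorbed once $E$ is small. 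A bootstrap on $\norm{U(t)}_{H^{k}}$ together with a continuation argument then extends the solution to all of $(0,T_0]$.

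To finish, I would reconstruct $\phi$ and check the constraints. Setting $\phi=c_H+\tfrac1\lambda(\phi^0_0-\phi^\lambda_0)$, which is the inversion of \eqref{deltaphiform}, one shows that the relations $\phi^0_0=t\del{t}\phi$, $\phi_i=\del{i}\phi$, and the coincidence of the two copies of $\phi_j$ in \eqref{udef} — all valid at $t=T_0$ by construction of $U_0$ — are propagated to every $t\in(0,T_0]$ by \eqref{awaveG}; this is a direct computation using the off-diagonal structure of $\Cc^k$ and the consistency of the $\ell=0$ and $\ell=\lambda$ blocks of \eqref{awaveD.1}--\eqref{awaveD.4}. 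Then $\phib=t^{3K-1}\phi$ is the asserted classical solution of \eqref{awaveA} with the stated initial conditions at $t=T_0$, and \eqref{vbform}--\eqref{rhobform} give a unique irrotational solution of \eqref{eul.1}--\eqref{eul.2}, uniqueness descending from uniqueness for \eqref{awaveG}. Finally, since $t\del{t}\phi=\phi^0_0$ and $D\phi=\phiv$ are components of $u$, while $\delta\phi=\tfrac1\lambda(\phi^0_0-\phi^\lambda_0)\in L^2$ with $D\delta\phi=\phiv\in H^{k+1}$ so that $\delta\phi\in H^{k+2}$, substituting these identities and the initial-data bound of the first paragraph into the energy estimate above converts it into
\begin{equation*}
\norm{\phi(t)-c_H}_{H^{k+2}}^2+\norm{t\del{t}\phi(t)}_{H^{k+1}}^2+\int_t^{T_0}\frac1\tau\bigl(\norm{\tau\del{\tau}\phi(\tau)}_{H^{k+1}}^2+\norm{D\phi(\tau)}_{H^{k+1}}^2\bigr)\,d\tau\ \lesssim\ \norm{\phit_0-c_H}_{H^{k+2}}^2+\norm{\phit_1}_{H^{k+1}}^2,
\end{equation*}
while the stated regularity of $\phi$ follows from that of $U$ via $\del{t}\phi=\tfrac1t\phi^0_0$, $D\phi=\phiv$, and the equivalence $\norm{f}_{H^{m+1}}\sim\norm{f}_{L^2}+\norm{Df}_{H^{m}}$ on $\Tbb^3$.

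The real obstacle lies not in this reduction but in Theorem \ref{symthm} itself: the constant singular term $\tfrac1t C^k\del{k}U$ — absent from the Fuchsian systems treated in \cite{BOOS:2019,Oliynyk:CMP_2016} — is genuinely singular as $t\searrow0$ and couples the $\Pbb$- and $\Pbb^\perp$-blocks of $U$ (its $\Pbb^\perp$-to-$\Pbb^\perp$ part vanishes, cf. \eqref{Cckprop}, but its off-diagonal parts do not), while, in contrast to those works, the $\Pbb^\perp$-part of $U$ carries no dissipative term of its own. The energy estimate nevertheless closes because $C^k$ is constant and symmetric, so this term is a pure divergence in the top-order energy identity and disappears upon integration over $\Tbb^3$, and because the structural bounds \eqref{divB1}--\eqref{Hbnd2} are tailored so that every remaining singular interaction capable of feeding the $\Pbb^\perp$-energy is controlled by the dissipation $\tfrac1t\norm{\Pbb U}_{H^{k}}^2$ furnished by $\Bc(0)=\lambda B^0(0)$; the price paid is that one obtains only boundedness — not decay — of $\norm{\Pbb^\perp U(t)}_{H^{k}}$ as $t\searrow0$.
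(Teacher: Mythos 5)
Your reduction to Theorem \ref{symthm} is correct in spirit, but the organization you choose is genuinely different from the paper's and it creates extra work that the paper deliberately avoids. The paper does \emph{not} solve the Fuchsian system \eqref{awaveG} directly from data $U_0$ and then reconstruct $\phi$. Instead it first invokes standard local existence for the quasilinear wave equation \eqref{awaveB} to obtain $\phi$ on a maximal interval $(T^*,T_0]$, then \emph{defines} $U=(u,Du)^{\tr}$ from this $\phi$ (so every algebraic and differential constraint in \eqref{udef}, \eqref{Udef} holds automatically), notes that this $U$ solves \eqref{awaveG}, and uses Theorem \ref{symthm} only as an a priori bound plus continuation criterion to force $T^*=0$. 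In your route, by contrast, you solve \eqref{awaveG} for an unconstrained $U=(u,w)$ and must then argue that $w=Du$, that the two copies of $\phi_j$ in \eqref{udef} coincide, that $\phi^0_0-\phi^\lambda_0=\lambda\,\delta\phi$ with $\del{j}\phi=\phi_j$, all propagate from $t=T_0$. You flag this ("a direct computation using the off-diagonal structure of $\Cc^k$\dots") but do not carry it out, and it is not cost-free: although the $u$-block of \eqref{awaveG} decouples (the coefficients depend only on $u$), the coefficients \eqref{alpha00def}--\eqref{mudef} already involve $\delta\phi$ and $\phi_j$ in a way that presupposes the internal constraints on $u$, so the unconstrained system is only equivalent to \eqref{awaveD.1}--\eqref{awaveD.4} modulo a propagation argument. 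This is exactly the step the paper's ordering of the proof is designed to bypass. So: a different route, workable, but with an unfinished constraint-propagation lemma that the paper never needs.

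Two smaller points. First, Theorem \ref{symthm} is stated on the interval $[T_0,0)$ with $T_0<0$ and time increasing toward $0$, whereas the fluid problem is posed on $(0,T_0]$ with $T_0>0$ and the future at $t\searrow 0$; the paper explicitly performs the change of variable $t\mapsto -t$ before applying the theorem, and you omit this. Second, the verification that the structural constants $\kappa,\lambda_a,\beta_a$ can be chosen to satisfy the hypothesis $\kappa>\frac12\gamma_1\bigl(\sum_a\beta_a+2(\lambda_1+\lambda_2)\bigr)$ rests on the smallness of $R$; you allude to "the damping normalisation" but do not record that $\kappa$ can be taken in $(0,\lambda)$ with $\lambda_a,\beta_a$ made arbitrarily small by shrinking $R$, which is what actually closes the hypothesis.
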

\begin{proof}
First, we fix $T_0>0$, $k \in \Zbb_{>3/2+2}$,  $c_H>0$, $0<K<1/3$,  $\delta>0$ and choose $(\phit_0,\phit_1)\in H^{k+2}(\Tbb^{3})\times H^{k+1}(\Tbb^3)$ 
so that 
\begin{equation*}
\norm{\phit_0-c_H}_{H^{k+2}} + T_0\norm{\phit_1}_{H^{k+1}} \leq \delta.
\end{equation*}
We then know from standard existence and uniqueness theory for wave equations that there exists a
unique solution  
\begin{equation*}
\phi \in \bigcap_{\ell=0}^{k+2} C^\ell\bigl((T^*,T_0],H^{k+2-\ell}(\Tbb^{3})\bigr)
\end{equation*}
to the wave equation \eqref{awaveB} for some maximal time $T^* \in (0,T_0)$ that satisfies the
the initial conditions 
\begin{equation*}
(\phib|_{t=T_0}, \del{t}\phib|_{t=T_0}) = \bigl(T_0^{3K-1}\phit_0,T_0^{3K-2}(T_0\phit_1 +(3K-1)\phit_0)\bigr).
\end{equation*}
Furthermore, we know, from the calculations carried out in Section \ref{irrot}, that 
\begin{equation*}
U = \begin{pmatrix} u & Du\end{pmatrix}^{\tr},
\end{equation*}
where
\begin{equation*}
u =\begin{pmatrix} t\del{t}\phi & \del{j}\phi & t\del{t}\phi -\lambda (\phi-c_H) & \del{j}\phi \end{pmatrix}^{\tr}, \quad \lambda = 1-3 K,
\end{equation*}
defines a solution of \eqref{awaveG} on the time interval $(T^*,T_0]$. It is also clear from \eqref{divB1}-\eqref{Pbb2=Pbb}, and the symmetry of the
matrices $B^0(U)$ and $B^k(U)$ that, after the simple time transformation $t\mapsto -t$ and for $U$ satisfying $|U|\leq R$ with $R$ chosen sufficiently small, all of the assumptions from Section \ref{coeffassump}
will be satisfied for any choice of $\kappa \in (0,\lambda)$, and  $\lambda_a>0$, $a=1,2$, and $\beta_a>0$, $a=1,2,3,4$, chosen as small as we like. Since
\begin{equation*}
\norm{U(0)}_{H^k} \lesssim \norm{\phit_0-c_H}_{H^{k+2}} + T_0\norm{\phit_1}_{H^{k+1}} \leq \delta,
\end{equation*}
it then follows from Theorem \ref{symthm} that, for $\delta>0$ chosen sufficiently small, the maximal time of existence is $T^*=0$ and $U$ will satisfy an energy estimate of the form
\begin{equation} \label{MilnethmEnergy}
\norm{U(t)}_{H^k}^2+ \int_{t}^{T_0} \frac{1}{\tau} \norm{\Pbb U(\tau)}_{H^k}^2\, d\tau   \leq C\bigl(\delta,\delta^{-1}\bigr)\norm{U(T_0)}_{H^k}^2
\end{equation}
for all $t\in (0,T_0]$. In particular this shows that the solution $\phi$ exists on the time interval $(0,T_0]$. Additionally, since
\begin{equation*}
\norm{\Pbb U}_{H^k}^2 \approx \norm{t\del{t}\phi}_{H^{k+1}}^2+ 
\norm{D\phi}_{H^{k+1}}^2 \AND \norm{U}^2_{H^k} \approx \norm{\phi-c_H}_{H^{k+2}}^2 +\norm{t\del{t}\phi}_{H^{k+1}}^2,
\end{equation*}
the stated bound satisfied by $\phi$ is a direct consequence of the energy estimate \eqref{MilnethmEnergy}. This complete the proof of the theorem. 
\end{proof}

\section{Fuchsian initial value problems\label{Fuchsian}}

In this section, we develop an existence theory for the initial value problem (IVP) for Fuchsian equations of the form
\begin{align}
B^0(u)\del{t}u + \frac{1}{t}(C^i + B^i(u))\del{i} u  &= \frac{1}{t}\Bc(u)\Pbb u + \frac{1}{t}F(u) && \text{in $[T_0,T_1)\times \Tbb^{n}$,} \label{symivp.1} \\
u &=u_0 && \text{in $\{T_0\}\times \Tbb^{n}$,} \label{symivp.2}
\end{align}
where now $T_0 < T_1 \leq 0$ and the coefficients satisfy the assumptions set out in the following section. Since these assumptions imply, in particular, 
that \eqref{symivp.1} is symmetric hyperbolic, this evolution equation enjoys the Cauchy stability property. As a consequence, the existence of solutions to \eqref{symivp.1}-\eqref{symivp.2} when $T_1<0$ is guaranteed for sufficiently small initial data. Thus the main aim of this section will be to establish the existence and uniqueness of solutions for $T_1=0$ under a suitable smallness assumption on the initial data.

The study of the IVP for Fuchsian equations was initiated in \cite{Oliynyk:CMP_2016}, and, there, the existence and uniqueness of solutions
to \eqref{symivp.1}-\eqref{symivp.2} on intervals of the form $[T_0,0)$ was established under a small initial data assumption. Furthermore, decay estimates 
as $t\nearrow 0$ were also obtained. This existence theory was then generalized in \cite{BOOS:2019} to allow for certain coefficients to have a singular dependence on $t$, which significantly widened the applicability of this theory to establish global existence and decay results for systems of hyperbolic equations. For examples
of global existence results for a range of different hyperbolic systems that have been established using the Fuchsian method see 
\cite{BOOS:2019,LeFlochWei:2015,LiuOliynyk:2018b,LiuOliynyk:2018a,LiuWei:2019,Oliynyk:CMP_2016,Wei:2018}. 

The existence theory from \cite{BOOS:2019} does
not apply to the IVP \eqref{symivp.1}-\eqref{symivp.2} due to the appearance of the term $\frac{1}{t}C^k\del{k}u$, which does not satisfy the assumptions
needed for the existence theory. In the present paper we therefore establish a complementary theorem to \cite{BOOS:2019} which provides existence and uniqueness of solutions to \eqref{symivp.1}-\eqref{symivp.2} on time intervals $[T_0,0)$.  Interestingly, due
to the term $\frac{1}{t}C^k\del{k}u$, these solutions do not decay uniformly as $t\nearrow 0$, which is a key difference compared to the Fuchsian equations considered in \cite{BOOS:2019} whose solutions do decay uniformly. The precise statement of our existence result that is applicable to the IVP \eqref{symivp.1}-\eqref{symivp.2}
is given below in Theorem \ref{symthm}.

\subsection{Coefficient assumptions\label{coeffassump}}
We specify in the following the assumptions on the coefficients in \eqref{symivp.1}-\eqref{symivp.2}.
\begin{enumerate}[(i)]
\item The solution $u(t,x)$ is a $\Rbb^N$-valued map.
\item The matrix $\Pbb\in \Mbb{N}$ is a constant, symmetric projection
operator, that is,
\begin{equation} \label{Pbbprop}
\Pbb^2 = \Pbb,  \quad  \Pbb^{\tr} = \Pbb, \quad \del{t}\Pbb =0 \AND \del{j} \Pbb =0.
\end{equation}
For use below, we define the \textit{complementary projection operator} by
\begin{equation*}
\Pbb^\perp = \id -\Pbb.
\end{equation*}
\item The matrices $C^k \in \Mbb{N}$ are constant and symmetric, that is
\begin{equation} \label{Ckmat}
(C^k)^{\tr} = C^k, \quad \del{t} C^k =0 \AND \del{j} C^k =0.
\end{equation}
\item There exist constants  $\kappa, \gamma_1, \gamma_2 >0$ such that the maps $B^0,\Bc \in 
C^\infty(B_R(\Rbb^N),\Mbb{N})\bigr)$ satisfy
\begin{equation} \label{B0BCbnd}
\frac{1}{\gamma_1} \id \leq  B^0(v)\leq \frac{1}{\kappa} \Bc(v) \leq \gamma_2 \id
\end{equation}
for  all $v\in B_{R}(\Rbb^N)$, and the following additional properties:
\begin{gather} 
(B^0(v))^{\tr} = B^0(v), \label{B0sym}\\
[\Pbb,\Bc(v)] = 0, \label{BcPbbcom}\\
\del{k}\bigl(\Pbb^\perp \Bc(v)\bigr) = 0, \label{BcPbbperpdiff}\\
|\Pbb(B^0(v)-B^0(0))\Pbb|_{\op}+|\Pbb \Bc(v)-\Pbb \Bc(0)|_{\op} \lesssim |\Pbb v|, \label{BcPbbOrd}\\
\intertext{and}
\bigl|\Pbb^\perp\bigl(B^0(v)-B^0(0)\bigr)\Pbb^\perp\bigr|_{\op}+\bigl|\Pbb^\perp B^0(v)\Pbb\bigr|_{\op}+
\bigl|\Pbb B^0(v)\Pbb^\perp\bigr|_{\op} \lesssim |\Pbb v|^2 \label{B0Ord}
\end{gather}
for all $v \in B_{R}(\Rbb^N)$. 

It is then not difficult to see that \eqref{BcPbbcom}-\eqref{B0Ord} imply that 
\begin{gather}
[\Pbb^\perp, \Bc(v)] = [\Pbb,\Bc(v)^{-1}] = [\Pbb^\perp, \Bc(v)^{-1}] =0, \label{BcPbbcomB}\\
\del{k}(\Pbb^\perp \Bc(v)^{-1}) = 0, \label{BcPbbperpdiffB}\\
|\Pbb((B^0(v))^{-1}-(B^0(0))^{-1})\Pbb|_{\op}+\bigl|\Pbb\Bc(v)^{-1}-\Pbb\Bc(0)^{-1}\bigr|_{\op}\lesssim |\Pbb v|  \label{BcPbbOrdB}
\intertext{and}
\bigl|\Pbb^\perp\bigl((B^0(v))^{-1}-(B^0(0))^{-1}\bigr)\Pbb^\perp\bigr|_{\op}+\bigl|\Pbb^\perp(B^0(v))^{-1}\Pbb\bigr|_{\op} 
+\bigl|\Pbb (B^0(v))^{-1} \Pbb^\perp\bigr|_{\op} \lesssim |\Pbb v|^2 \label{B0OrdB}
\end{gather}
for all $v\in B_R(\Rbb^N)$. 

\item There exist constants $\lambda_a$, $a=1,2$, such that the map $F\in C^\infty(B_R(\Rbb^N),\Rbb^N)$ satisfies
\begin{align} 
|\Pbb F(v)| &\leq \lambda_1|\Pbb v| \label{F2vanish} 
\intertext{and}
|\Pbb^\perp F(v)| &\leq \frac{\lambda_2}{R} |\Pbb v|^2 \label{F2bnd.3}
\end{align}
for all $v\in B_R(\Rbb^N)$.

\item The  maps $B^k\in  C^\infty(B_R(\Rbb^N),\Mbb{N})$ satisfy
\begin{gather} 
(B^k(v))^{\tr}=B^k(v), \label{Bksym}\\
\bigl|\Pbb^\perp B^k(v)\Pbb \bigr|_{\op} +\bigl|\Pbb B^k(v) \Pbb^\perp \bigr|_{\op} \lesssim  |\Pbb v| , \label{BI2bnd.2}\\
\bigl|\Pbb^\perp  B^k(v) \Pbb^\perp  \bigr|_{\op}  \lesssim  |\Pbb v|^2 \label{BI2bnd.4}
\intertext{and}
\bigl|\Pbb B^k(v)\Pbb\bigr|_{\op} \lesssim |v| \label{BI2bnd.3}
\end{gather}
for all $v\in B_R(\Rbb^N)$.

\item  There exist constants $\beta_a \geq 0$, 
$a=1,2,3,4$, such that the map
\begin{equation*}
\Div\! B \: : \: B_R\bigl(\Rbb^N\times \Rbb^{n\times N}\bigr)  \longrightarrow \Mbb{N}
\end{equation*} 
defined by
\begin{align} 
\Div\!  B(t,v,w) &= D_v B^0(v)\cdot (B^0(v))^{-1}\Biggl(-\frac{1}{t}(C^k + B^k(v))w_k +
 \frac{1}{t}\Bc(v)\Pbb v + \frac{1}{t}F(v)
\biggr) + \frac{1}{t} D_v B^k(v)w_k \label{divBdef}
\end{align}
satisfies
\begin{align}
\bigl|\Pbb \Div \! B(t,v,w) \Pbb \bigr|_{\op} & \leq |t|^{-1}\beta_1, \label{divBbnd.1}\\
\bigl|\Pbb \Div\! B(t,v,w) \Pbb^\perp \bigr|_{\op} &\leq
\frac{|t|^{-1}\beta_2}{R}|\Pbb v|, \label{divBbnd.2}\\
\bigl|\Pbb^\perp \Div\! B(t,v,w) \Pbb \bigr|_{\op}& \leq  \frac{|t|^{-1}\beta_3}{R}|\Pbb v| \label{divBbnd.3}
\intertext{and}
\bigl|\Pbb^\perp \Div\! B(t,v,w) \Pbb^\perp \bigr|_{\op}& \leq\frac{|t|^{-1}\beta_4}{R^2}|\Pbb v|^2. \label{divBbnd.4}
\end{align}
\end{enumerate}

\begin{rem} \label{Divrem}
$\;$

\begin{enumerate}[(i)]
\item For a symmetric matrix $A\in \Mbb{N}$, we have the equality $|\Pbb^\perp A \Pbb|=|\Pbb A\Pbb^\perp|$. From this property, it is clear that
some of the assumptions above are redundant and we can always take $\beta_2=\beta_3$.  
\item For solutions  $u(t,x)$ of \eqref{symivp.1}, we have
\begin{align*} 
\Div\! B(t,u(t,x),D u(t,x)) &=\del{t}\bigl(B^0(u(t,x)\bigr)+\del{k}\biggl(\frac{1}{t} \bigl(C^k + B^k(u(t,x)\bigr)\biggr)\\
&=\del{t}\bigl(B^0(u(t,x)\bigr)+\frac{1}{t}\del{k}\bigl(B^k(u(t,x)\bigr).
\end{align*}
Furthermore, for the proof of Theorem \ref{symthm}, it will be clear that we only require that the estimates \eqref{divBbnd.1}-\eqref{divBbnd.4} hold for solutions where $w_i = \del{i}u$. This is important when we want to consider the spatially differentiated version of \eqref{symivp.1} together with \eqref{symivp.1}, which will yield
an equation of the same form as \eqref{symivp.1} for the variables $(u,w=Du)$. By considering this extended system, we will be able to relax the
assumptions \eqref{divBbnd.1}-\eqref{divBbnd.4} to
\begin{align*}
\bigl|\Pbb \Div \! B(t,v,w) \Pbb \bigr|_{\op} & \leq |t|^{-1}\beta_1, 
\\
\bigl|\Pbb \Div\! B(t,v,w) \Pbb^\perp \bigr|_{\op} &\leq
\frac{|t|^{-1}\beta_2}{R}(|\Pbb v|+|\Pbb w|), 
\\
\bigl|\Pbb^\perp \Div\! B(t,v,w) \Pbb \bigr|_{\op}& \leq  \frac{|t|^{-1}\beta_3}{R}(|\Pbb v|+|\Pbb w|) 
\intertext{and}
\bigl|\Pbb^\perp \Div\! B(t,v,w) \Pbb^\perp \bigr|_{\op}& \leq\frac{|t|^{-1}\beta_4}{R^2}(|\Pbb v|^2+|\Pbb w|^2). 
\end{align*}
\end{enumerate}
\end{rem}

\subsection{Preliminary estimates}
Before proceeding with the statement and proof of Theorem \ref{symthm}, we first establish some estimates that will be used in the proof. 
For a given $k\in \Zbb_{>n/2+1}$, we let $C_{\text{Sob}}>0 $ be the constant from Sobolev's inequality, that is,
\begin{equation} \label{Sobest}
\max\bigl\{\norm{D u(t)}_{L^\infty},\norm{u(t)}_{L^\infty}\bigr\} \leq C_{\text{Sob}}\norm{u(t)}_{H^k}.
\end{equation}

\begin{prop} \label{FdivBprop} 
Suppose  $k\in \Zbb_{>n/2+1}$, $u \in  B_{C_{\text{Sob}}^{-1}R}\bigl(H^{k}(\Tbb^n,\Rbb^N)\bigr)$, $v\in L^2(\Tbb^n,\Rbb^N)$, $F=F(t,u(x))$
and $\Div\! B = \Div\!B(t,u(x),D u(x))$. Then
\begin{equation*}
|\ip{u}{F}|\leq (\lambda_1+\lambda_2)\norm{\Pbb u}_{L^2}^2
\end{equation*}
and
\begin{align*}
|\ip{v}{\Div\! B v}| \leq |t|^{-1}\biggl(\beta_1 \norm{\Pbb v}_{L^2}^2+ \frac{\beta_2+\beta_3}{R} \norm{|v||\Pbb v| |\Pbb u|}_{L^1} + \frac{\beta_4}{R^2} \norm{|v|^2|\Pbb u|^2}_{L^1}\Bigr).
\end{align*}
\end{prop}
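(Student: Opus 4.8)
The plan is to prove both inequalities by decomposing the pointwise integrand along the orthogonal splitting $\id = \Pbb + \Pbb^\perp$, applying the structural bounds on $F$ and $\Div\! B$ block by block, and then integrating over $\Tbb^n$. The one preliminary observation is that the hypothesis $u\in B_{C_{\text{Sob}}^{-1}R}\bigl(H^k(\Tbb^n,\Rbb^N)\bigr)$ together with the Sobolev inequality \eqref{Sobest} forces $\max\{\norm{u}_{L^\infty},\norm{Du}_{L^\infty}\} < R$, so that $(u(x),Du(x))\in B_R(\Rbb^N\times\Rbb^{n\times N})$ for every $x$ and the pointwise estimates \eqref{F2vanish}, \eqref{F2bnd.3} and \eqref{divBbnd.1}--\eqref{divBbnd.4} are all available with $(v,w)=(u(x),Du(x))$. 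Throughout I use that, since $\Pbb$ is a symmetric projection, $\Pbb^\perp$ is also a symmetric projection, $\Pbb\Pbb^\perp=\Pbb^\perp\Pbb=0$, and $|\Pbb\xi|\le|\xi|$, $|\Pbb^\perp\xi|\le|\xi|$ for all $\xi\in\Rbb^N$.

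For the first estimate, I split $u=\Pbb u+\Pbb^\perp u$ and $F=\Pbb F+\Pbb^\perp F$; the two cross terms vanish because $\Pbb\Pbb^\perp=0$ and $\Pbb^{\tr}=\Pbb$, giving the pointwise identity $\ipe{u}{F}=\ipe{\Pbb u}{\Pbb F}+\ipe{\Pbb^\perp u}{\Pbb^\perp F}$. Cauchy--Schwarz and \eqref{F2vanish} bound the first term by $\lambda_1|\Pbb u|^2$, while Cauchy--Schwarz, \eqref{F2bnd.3} and the pointwise bound $|\Pbb^\perp u(x)|\le|u(x)|<R$ bound the second by $\frac{\lambda_2}{R}\cdot R\cdot|\Pbb u|^2=\lambda_2|\Pbb u|^2$. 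Integrating over $\Tbb^n$ yields $|\ip{u}{F}|\le(\lambda_1+\lambda_2)\norm{\Pbb u}_{L^2}^2$.

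For the second estimate, the same projection algebra produces the four-block expansion
\[
\ipe{v}{\Div\! B\,v}=\ipe{\Pbb v}{\Pbb\,\Div\! B\,\Pbb v}+\ipe{\Pbb v}{\Pbb\,\Div\! B\,\Pbb^\perp v}+\ipe{\Pbb^\perp v}{\Pbb^\perp\Div\! B\,\Pbb v}+\ipe{\Pbb^\perp v}{\Pbb^\perp\Div\! B\,\Pbb^\perp v}.
\]
Applying Cauchy--Schwarz, the operator-norm bounds \eqref{divBbnd.1}--\eqref{divBbnd.4} with $v$ there replaced by $u(x)$, and $|\Pbb^\perp v|\le|v|$, the four terms are bounded pointwise by $|t|^{-1}\beta_1|\Pbb v|^2$, $\frac{|t|^{-1}\beta_2}{R}|v||\Pbb v||\Pbb u|$, $\frac{|t|^{-1}\beta_3}{R}|v||\Pbb v||\Pbb u|$ and $\frac{|t|^{-1}\beta_4}{R^2}|v|^2|\Pbb u|^2$ respectively. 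Integrating over $\Tbb^n$ and merging the $\beta_2$ and $\beta_3$ contributions gives the claimed inequality.

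I do not expect a genuine obstacle here; the proposition is essentially a bookkeeping exercise. The only points that require a little care are (a) checking, via \eqref{Sobest}, that the smallness of $\norm{u}_{H^k}$ really places both $u(x)$ and $Du(x)$ inside the balls on which the coefficient assumptions are posed, and (b) being disciplined with the projection algebra so that the cross terms drop and the replacements $|\Pbb^\perp(\cdot)|\le|\cdot|$ are legitimate --- both relying on $\Pbb$ being an orthogonal (symmetric) projection.
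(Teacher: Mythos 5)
Your proof is correct and follows essentially the same mechanism as the paper's: decompose along $\Pbb$ and $\Pbb^\perp$, apply the pointwise coefficient bounds \eqref{F2vanish}--\eqref{F2bnd.3} and \eqref{divBbnd.1}--\eqref{divBbnd.4}, and integrate. The only difference is that the paper delegates the $\Pbb^\perp F$ estimate and the $\Div B$ estimate to Proposition~3.4 of \cite{BOOS:2019} and only writes out the small extra computation for the $\Pbb F$ piece, whereas you reproduce the whole block-by-block argument from scratch (including the Sobolev-embedding check that $(u(x),Du(x))$ really lies in $B_R$), which is more self-contained but not a different idea.
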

\begin{proof}
The estimates 
\begin{equation} \label{FdivBprop1}
|\ip{u}{\Pbb^\perp F}|\leq \lambda_2\norm{\Pbb u}_{L^2}^2
\end{equation}
and
\begin{align*}
|\ip{v}{\Div\! B v}| \leq |t|^{-1}\biggl(\beta_1 \norm{\Pbb v}_{L^2}^2+ \frac{\beta_2+\beta_3}{R} \norm{|v||\Pbb v| |\Pbb u|}_{L^1} + \frac{\beta_4}{R^2} \norm{|v|^2|\Pbb u|^2}_{L^1}\Bigr).
\end{align*}
are a direct consequence of Proposition 3.4 of \cite{BOOS:2019} and the assumptions \eqref{F2vanish}-\eqref{F2bnd.3} and
\eqref{divBbnd.1}-\eqref{divBbnd.4}. We see also from \eqref{Pbbprop} and \eqref{F2vanish} that
\begin{equation*}
|\ip{u}{\Pbb F}| = |\ip{\Pbb u}{\Pbb F}|\leq \lambda_1\norm{\Pbb u}_{L^2}^2
\end{equation*}
So from this and \eqref{FdivBprop1}, we find, with the help of the triangle inequality, that
\begin{equation*}
|\ip{u}{F}|=|\ip{u}{\Pbb F + \Pbb^\perp F}|=|\ip{u}{\Pbb F} + \ip{u}{\Pbb^\perp F}| \leq |\ip{u}{\Pbb F}|+ |\ip{u}{\Pbb^\perp F}| \leq (\lambda_1+\lambda_2)
\norm{\Pbb u}_{L^2}^2,
\end{equation*}
which completes the proof.
\end{proof}

\begin{prop} \label{Bcommprop}
Suppose  $k\in \Zbb_{>n/2+2}$, $1\leq |\alpha| \leq k$, $v\in L^2(\Tbb^n,\Rbb^N)$, $u \in  B_{C_{\text{Sob}}^{-1}R}\bigl(H^{k}(\Tbb^n,\Rbb^N)\bigr)$,
$\Bc=\Bc(u(x))$, $B^0=B^0(u(x))$ and $B^k=B^k(u(x))$. Then
\begin{gather*}
|\ip{v}{\Bc D^\alpha (\Bc^{-1}F)}| + |\ip{v}{\Bc[D^\alpha,\Bc^{-1}B^0](B^{0})^{-1}F}| \leq \Xi, \\
|\ip{v}{\Bc[D^\alpha,\Bc^{-1}B^0](B^{0})^{-1}\Bc \Pbb u}| \leq \Xi, \\
|\ip{v}{\Bc [D^\alpha,\Bc^{-1}B^k]\del{k}u}| + |\ip{v}{\Bc[D^\alpha,\Bc^{-1}B^0](B^{0})^{-1}B^k\del{k} u}|
\leq \Xi
\intertext{and}
|\ip{v}{\Bc [D^\alpha,\Bc^{-1}C^k]\del{k}u}|+ |\ip{v}{\Bc[D^\alpha,\Bc^{-1}B^0](B^{0})^{-1}C^k\del{k} u}| \leq \Xi
\end{gather*}
where
\begin{equation*}
\Xi =  C\bigl(\norm{u}_{H^k}\bigr) \Bigl(\norm{\Pbb v}_{L^2}\norm{\Pbb u}_{H^{k-1}}+\norm{v}_{L^2}\norm{\Pbb u}^2_{H^k} + 
\norm{\Pbb v}_{L^2}\norm{u}_{H^k}\norm{\Pbb u}_{H^k}\Bigr).
\end{equation*}
\end{prop}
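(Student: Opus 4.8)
The plan is to bound each inner product of the form $\ip{v}{\Bc[D^\alpha,\Bc^{-1}X]Y}$ (with $X$ one of $B^0,B^k,C^k$ and $Y$ the relevant factor) by $\Xi$ via the commutator Moser/product estimates from the appendix, exploiting the block structure forced by the coefficient assumptions. The key observation is that every term in the commutator $[D^\alpha,\Bc^{-1}X]$ carries at least one derivative landing on the coefficient $\Bc^{-1}X$, which — after inserting $\Pbb + \Pbb^\perp$ on both sides and using \eqref{BcPbbcomB}, \eqref{BcPbbperpdiffB}, \eqref{B0OrdB}, \eqref{BcPbbOrdB}, \eqref{BI2bnd.2}--\eqref{BI2bnd.3} — either produces a factor $\Pbb u$ (or $D\Pbb u$) from the coefficient, or leaves a $\Pbb$ acting on $v$ on the left. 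Thus each term is one of three shapes: (a) $\norm{\Pbb v}_{L^2}$ times a lower-order $H^{k-1}$ norm of $\Pbb u$, (b) $\norm{v}_{L^2}$ times a quadratic expression $\norm{\Pbb u}_{H^k}^2$, or (c) $\norm{\Pbb v}_{L^2}\norm{u}_{H^k}\norm{\Pbb u}_{H^k}$, exactly the three terms appearing in $\Xi$.

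\medskip

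\emph{First}, I would recall the standard commutator estimate $\norm{[D^\alpha, f]g}_{L^2} \lesssim \norm{Df}_{L^\infty}\norm{g}_{H^{|\alpha|-1}} + \norm{f}_{H^{|\alpha|}}\norm{g}_{L^\infty}$ together with the Sobolev embedding \eqref{Sobest}; these are the appendix lemmas referred to in the excerpt. \emph{Second}, for the term $\ip{v}{\Bc[D^\alpha,\Bc^{-1}B^0](B^0)^{-1}\Bc\Pbb u}$, I would write $v = \Pbb v + \Pbb^\perp v$ and use that $\Bc$ commutes with $\Pbb$ and $\Pbb^\perp$ (by \eqref{PbbBccom}, \eqref{BcPbbcomB}) to move the projections through; the $\Pbb^\perp v$ piece must pair against a term carrying $\Pbb^\perp\cdot(\Bc^{-1}B^0)\cdot$, which by \eqref{B0OrdB} is $O(|\Pbb u|^2)$, giving shape (b); the $\Pbb v$ piece uses the $\Pbb u$ already present on the right plus \eqref{BcPbbOrdB} to give shapes (a) and (c). \emph{Third}, for the $B^k$ and $C^k$ terms the argument is parallel, with the crucial input that $C^k$ is \emph{constant}, so $[D^\alpha, \Bc^{-1}C^k] = [D^\alpha, \Bc^{-1}]C^k$ and every term still carries a derivative on $\Bc^{-1}$, hence a factor of $\Pbb u$ by \eqref{BcPbbOrdB} and \eqref{B0OrdB} after projecting; meanwhile $\del{k}u$ contributes $\norm{u}_{H^k}$ or, when a $\Pbb$ can be shifted onto it using \eqref{Pbb2=Pbb} and $\del{k}\Pbb=0$, a factor $\norm{\Pbb u}_{H^k}$. \emph{Fourth}, for $\ip{v}{\Bc D^\alpha(\Bc^{-1}F)}$ I would use \eqref{F2vanish}--\eqref{F2bnd.3}: $\Pbb$-component of $F$ is $O(|\Pbb u|)$ and $\Pbb^\perp F$ is $O(|\Pbb u|^2)$, and distributing $D^\alpha$ by Leibniz and pairing against $\Pbb v + \Pbb^\perp v$ reproduces the three shapes once more.

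\medskip

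\textbf{The main obstacle} will be the careful bookkeeping of which projection lands where in the top-order term of each commutator — the term where all $|\alpha|$ derivatives hit the coefficient and none hit $u$ (or $F$, or $\del{k}u$). In that term one cannot afford to lose a power of $\Pbb u$, so one must verify that the leftover structure genuinely forces either a $\Pbb$ onto $v$ (yielding $\norm{\Pbb v}_{L^2}\norm{u}_{H^k}\norm{\Pbb u}_{H^k}$, shape (c), where one $\Pbb u$ comes from \eqref{BcPbbOrdB}/\eqref{B0OrdB} and the $\norm{u}_{H^k}$ absorbs the remaining derivatives on the coefficient) or, in the $\Pbb^\perp$-sandwiched case, a \emph{quadratic} gain $|\Pbb u|^2$ from \eqref{B0OrdB}/\eqref{BI2bnd.4} that then has to be estimated in $L^1$ using a product Sobolev inequality to land at $\norm{v}_{L^2}\norm{\Pbb u}_{H^k}^2$, shape (b). Once this top-order accounting is done correctly for $B^0$, the analogous bookkeeping for $B^k$, $C^k$ and $F$ is essentially mechanical, and all lower-order commutator terms are strictly easier since they leave a derivative on $u$ that Sobolev-embeds into $L^\infty$ with room to spare given $k > n/2 + 2$.
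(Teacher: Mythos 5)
Your proposal follows essentially the same route as the paper: decompose $v$ into $\Pbb v + \Pbb^\perp v$, commute $\Pbb,\Pbb^\perp$ through $\Bc$ and the commutators, invoke the appendix commutator/Moser estimates, and let the coefficient assumptions in \S\ref{coeffassump} supply the required powers of $\Pbb u$ in each block. The paper dispatches the first three inequalities by directly citing Proposition~3.6 of \cite{BOOS:2019} and devotes the actual argument to the two terms involving $C^k$, which is exactly where your ``third'' step focuses; the structure you describe (projection on $v$, derivative always landing on the coefficient, shapes (a)/(b)/(c) matching the three pieces of $\Xi$) matches the paper's reasoning.

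One precision worth flagging: for the term $\ip{v}{\Bc[D^\alpha,\Bc^{-1}C^k]\del{k}u}$, the paper's $\Pbb^\perp v$ piece does not merely acquire a quadratic gain as in the $B^0$ case --- it vanishes identically, because $\Pbb^\perp \Bc^{-1}$ is \emph{constant} by \eqref{BcPbbperpdiffB} and $C^k$ is constant, so $[D^\alpha,\Pbb^\perp\Bc^{-1}C^k]=0$. In your detailed third step you cite \eqref{BcPbbOrdB} and \eqref{B0OrdB}, which only control $\Pbb\Bc^{-1}$ and the blocks of $(B^0)^{-1}$; neither gives any perturbative bound on $\Pbb^\perp\Bc^{-1}$, so that citation alone would not close the $\Pbb^\perp v$ pairing. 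You do list \eqref{BcPbbperpdiffB} among your ingredients in the plan paragraph, and the exact vanishing of course subsumes any $O(|\Pbb u|^2)$ bound, so this is a bookkeeping slip rather than a genuine gap, but in writing out the argument you should make the vanishing explicit --- it is the one structural fact that distinguishes the $C^k$ commutator from the $B^0$ and $B^k$ commutators, which do not vanish and genuinely produce the $\norm{v}_{L^2}\norm{\Pbb u}_{H^k}^2$ contribution in $\Xi$.
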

\begin{proof}
Since the first three estimates follow directly from Proposition 3.6 of \cite{BOOS:2019} and the coefficient assumptions from Section \ref{coeffassump},
in particular, (ii),(iv), (v) and (vi), we only need to establish the last estimate. To do so, we assume that $1\leq |\alpha| \leq k$, and we observe from \eqref{Pbbprop}, \eqref{Ckmat}, \eqref{BcPbbcom} and \eqref{BcPbbcomB}-\eqref{BcPbbperpdiffB} that
\begin{align}
\ip{v}{\Bc [D^\alpha,\Bc^{-1}C^k]\del{k}u} &= \ip{\Pbb v}{\Bc [D^\alpha,\Bc^{-1}C^k]\del{k} u}
+\ip{\Pbb^\perp v}{\Bc [D^\alpha,\Bc^{-1}C^k]\del{k} u} \notag \\
 &= \ip{\Pbb v}{\Bc [D^\alpha,\Pbb\Bc^{-1}C^k]\del{k} u}
+\ip{\Pbb^\perp v}{\Bc [D^\alpha,\Pbb^\perp\Bc^{-1}C^k]\del{k}u} \notag \\
&= \ip{\Pbb v}{\Bc [D^\alpha,\Pbb\Bc^{-1}C^k]\del{k}u}. \notag
\end{align}
Applying the Cauchy-Schwartz inequality gives
\begin{equation*}
|\ip{v}{\Bc [D^\alpha,\Bc^{-1}C^k]\del{k}u}| \leq \norm{\Pbb v}_{L^2} \norm{\Bc [D^\alpha,\Pbb\Bc^{-1}C^k]\del{k}u}_{L^2}.
\end{equation*}
With the help of the H\"{o}lder's inequality, see Theorem \ref{Holder}, the commutator  estimates from Theorem \ref{calcpropB}, and  Sobolev's inequality, see Theorem \ref{Sobolev}, we 
then get
\begin{equation}\label{Bcommprop1}
|\ip{v}{\Bc [D^\alpha,\Bc^{-1}C^k]\del{k}u}| \leq \norm{\Pbb v}_{L^2} \norm{\Bc}_{H^k}\norm{D\Pbb\Bc^{-1}}_{H^{k-1}}\norm{u}_{H^{k}}.
\end{equation}
But from \eqref{BcPbbOrdB}, Sobolev's inequality and the Moser estimates from Theorem \ref{calcpropC}, we know that $\norm{D\Pbb\Bc^{-1}}_{H^{k-1}}\leq
C(\norm{u}_{H^k})\norm{\Pbb u}_{H^k}$,
and hence, we conclude from \eqref{Bcommprop1} that
 \begin{equation} \label{Bcommprop2a}
|\ip{v}{\Bc [D^\alpha,\Bc^{-1}C^k]\del{k}u}| \leq C(\norm{u}_{H^k})\norm{\Pbb v}_{L^2}\norm{\Pbb u}_{H^k}\norm{u}_{H^{k}}.
\end{equation}

Next, by \eqref{Pbbprop}, \eqref{BcPbbcom} and \eqref{BcPbbcomB}, we have
\begin{align*}
\ip{v}{\Bc[D^\alpha,\Bc^{-1}B^0](B^{0})^{-1}C^k\del{k}u} &= \ip{\Pbb v}{\Bc[D^\alpha,\Bc^{-1}B^0](B^{0})^{-1}C^k\del{k} u}
+\ip{\Pbb^\perp v}{\Bc[D^\alpha,\Bc^{-1}B^0](B^{0})^{-1}C^k
\del{k}u} \\
 &=\ip{\Pbb v}{\Bc[D^\alpha,\Bc^{-1}B^0](B^{0})^{-1}C^k\del{k} u}
+\ip{\Pbb^\perp v}{\Bc[D^\alpha,\Pbb^\perp\Bc^{-1}B^0](B^{0})^{-1}C^k
\del{k}u}.
\end{align*}
Estimating this expression as above yields
\begin{equation}\label{Bcommprop2}
|\ip{v}{\Bc[D^\alpha,\Bc^{-1}B^0](B^{0})^{-1}C^k\del{k}u}| \leq C(\norm{u}_{H^k})\Bigl(\norm{\Pbb v}_{L^2}\norm{D(\Bc^{-1}B^0)}_{H^{k-1}}\norm{u}_{H^{k}}
+ \norm{v}_{L^2}\norm{D(\Pbb^\perp\Bc^{-1}B^0)}_{H^{k-1}}\Bigr).
\end{equation}
But
\begin{align*}
|\Bc^{-1}B^0-(\Bc^{-1}B^0)|_{u=0}|\lesssim |\Pbb u|
\AND
|\Pbb^\perp\Bc^{-1}B^0-(\Pbb^\perp\Bc^{-1}B^0)|_{u=0}|\lesssim |\Pbb u|^2
\end{align*}
by  \eqref{BcPbbcom}-\eqref{B0Ord} and \eqref{BcPbbOrdB}, and consequently, by Sobolev's inequality and the Moser estimates from Theorem \ref{calcpropC},
we see that 
\begin{align*}
\norm{D(\Bc^{-1}B^0)}_{H^{k-1}}\leq C(\norm{u}_{H^k})\norm{\Pbb u}_{H^{k}}
\AND
\norm{D(\Pbb^\perp\Bc^{-1}B^0)}_{H^{k-1}}\leq C(\norm{u}_{H^k})\norm{\Pbb u}_{H^{k}}^2.
\end{align*}
Substituting these into \eqref{Bcommprop2} gives
\begin{equation}\label{Bcommprop3}
|\ip{v}{\Bc[D^\alpha,\Bc^{-1}B^0](B^{0})^{-1}C^k\del{k}u}| \leq C(\norm{u}_{H^k})\Bigl(\norm{\Pbb v}_{L^2}\norm{\Pbb u}_{H^k}\norm{u}_{H^{k}}
+ \norm{v}_{L^2}\norm{\Pbb u}_{H^{k}}^2\Bigr).
\end{equation}
Combining the inequalities \eqref{Bcommprop2a} and \eqref{Bcommprop3}, we see that the final estimate in the statement of the proposition holds, which
completes the proof.
\end{proof}
\begin{rem}
The structure of the term $\Xi$ in the above proposition plays an important role in the following proof of global existence. 
Schematically, the second bracketed term of $\Xi$ involves a quadratic term $\norm{\Pbb u}^2_{H^k}$ which will be bounded by the energy times a small coefficient coming from $\norm{v}_{L^2}$. A similar argument holds for the third term of $\Xi$. By contrast, the first term of $\Xi$ will require a more subtle analysis using Ehrling's lemma. This lemma will allow us to obtain a small coefficient at the expense of gaining derivatives and additional terms. Note that commutators involving the matrices $C^k$, which are the key new terms compared to \cite{BOOS:2019}, also lead to these more problematic terms in $\Xi$.
\end{rem}

\subsection{Global existence\label{global}}
The following theorem guarantees, under a suitable small initial data hypothesis, the existence of solutions to the the IVP \eqref{symivp.1}-\eqref{symivp.2} on the time interval  $[T_0,0)$.
The proof is similar to the first part of the proof of Theorem 3.8 from \cite{BOOS:2019} where existence is established. However, unlike Theorem 3.8 from \cite{BOOS:2019}, there
is no corresponding uniform decay estimate as $t\nearrow 0$ for solutions. This is due to the singular term $\frac{1}{t}C^k \del{k}u$ that prevents solutions 
to \eqref{symivp.1}-\eqref{symivp.2} from satisfying an estimate analogous to the one from Proposition 3.2 of \cite{BOOS:2019} unless additional assumptions on the coefficients are imposed.
\begin{thm} \label{symthm}
Suppose $k \in \Zbb_{>n/2+2}$, $u_0\in H^k(\Tbb^{n})$, assumptions (i)-(vii) from Section \ref{coeffassump} are fulfilled,
and the constants $\kappa$, $\gamma_1$, $\lambda_1$, $\lambda_2$, $\beta_1$, $\beta_2$, $\beta_3$, and $\beta_4$ from Section \ref{coeffassump} satisfy
\begin{equation*}
\kappa > \frac{1}{2}\gamma_1\biggl(\sum_{a=1}^4  \beta_{2a}+2(\lambda_1+\lambda_2)\biggr).
\end{equation*}
Then there exists a  $\delta > 0$ such that if  $\norm{u_0}_{H^k} < \delta$,
then there exists a unique solution 
\begin{equation*}
u \in C^0\bigl([T_0,0),H^k(\Tbb^{n},\Rbb^N)\bigr)\cap C^1\bigl([T_0,0),H^{k-1}(\Tbb^{n},\Rbb^N)\bigr)
\end{equation*}
of the IVP \eqref{symivp.1}-\eqref{symivp.2} that satisfies the
energy estimate
\begin{equation*}
\norm{u(t)}_{H^k}^2- \int_{T_0}^t \frac{1}{\tau} \norm{\Pbb u(\tau)}_{H^k}^2\, d\tau   \leq C\bigl(\delta,\delta^{-1}\bigr) \norm{u_0}_{H^k}^2.
\end{equation*}
\end{thm}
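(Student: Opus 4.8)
The plan is to run the energy-method argument underlying the existence half of Theorem 3.8 of \cite{BOOS:2019}, the one genuinely new ingredient being the treatment of the singular transport term $\tfrac1t C^k\del{k}u$. Assumptions (i)--(vii) make \eqref{symivp.1} a symmetric hyperbolic system with uniformly positive definite $B^0$, so standard local theory yields a unique maximal solution $u\in C^0([T_0,T^*),H^k)\cap C^1([T_0,T^*),H^{k-1})$ with $T^*\in(T_0,0]$, together with the continuation principle that $T^*<0$ forces $\limsup_{t\nearrow T^*}\norm{u(t)}_{H^k}=\infty$ or $(u(t),Du(t))$ to exit the ball $B_R$; Cauchy stability moreover gives existence past any fixed $T_1<0$ for small data. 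Everything therefore reduces to an a priori $H^k$ bound forcing $T^*=0$, which I run as a continuity/bootstrap argument: fix $\epsilon_0\in(0,C_{\text{Sob}}^{-1}R)$ and work on the maximal subinterval of $[T_0,T^*)$ on which $\norm{u(t)}_{H^k}\le\epsilon_0$, so that in particular $\norm{u(t)}_{L^\infty}\le C_{\text{Sob}}\norm{u(t)}_{H^k}\le R$ there.

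The energy is the $B^0$-weighted functional $\mathcal E(t)=\sum_{0\le|\alpha|\le k}\tfrac12\ip{D^\alpha u}{B^0(u)D^\alpha u}$, which by \eqref{B0BCbnd} satisfies $\mathcal E(t)\approx\norm{u(t)}_{H^k}^2$. At zeroth order, pairing \eqref{symivp.1} with $u$ in $L^2$, integrating the transport term by parts (the constant $C^k$ drops under $\del{k}$ and $C^k+B^k$ is symmetric), and combining the $\del{t}B^0$ piece with the $\tfrac1t\del{k}B^k$ piece through the identity of Remark \ref{Divrem}(ii) gives
\[
\tfrac12\del{t}\ip{u}{B^0u}=\tfrac1t\ip{u}{\Bc\Pbb u}+\tfrac1t\ip{u}{F}+\tfrac12\ip{u}{\Div\! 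B\,u}.
\]
Since $[\Pbb,\Bc]=0$, $\Bc\ge\kappa B^0\ge\tfrac{\kappa}{\gamma_1}\id$ and $t<0$, the first term is $\le\tfrac{\kappa}{\gamma_1 t}\norm{\Pbb u}_{L^2}^2$, a strictly negative dissipative quantity, while Proposition \ref{FdivBprop}, together with $\norm{u}_{L^\infty}\le R$ (which converts its $R^{-1}$- and $R^{-2}$-weighted terms into quadratic ones), bounds the remaining two by $|t|^{-1}$ times a fixed multiple of $\norm{\Pbb u}_{L^2}^2$ built from $\lambda_1,\lambda_2$ and $\beta_1,\dots,\beta_4$. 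The quantitative hypothesis on $\kappa$ is exactly what makes the net coefficient of $\norm{\Pbb u}_{L^2}^2$ strictly negative, so integrating from $T_0$ to $t$ yields
\[
\norm{u(t)}_{L^2}^2+\int_{T_0}^t\frac{1}{|\tau|}\norm{\Pbb u(\tau)}_{L^2}^2\,d\tau\lesssim\norm{u_0}_{L^2}^2\lesssim\delta^2 .
\]

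For $1\le|\alpha|\le k$ one differentiates \eqref{symivp.1}, reproducing the same equation for $D^\alpha u$ modulo commutators; writing $[D^\alpha,B^0]=[D^\alpha,\Bc]\Bc^{-1}B^0+\Bc[D^\alpha,\Bc^{-1}B^0]$ and using $\Bc^{-1}\Bc\Pbb=\Pbb$ puts every error term into one of the forms estimated in Propositions \ref{FdivBprop} and \ref{Bcommprop}, the genuinely new contributions being the $C^k$-commutators $\ip{D^\alpha u}{\Bc[D^\alpha,\Bc^{-1}C^k]\del{k}u}$ and $\ip{D^\alpha u}{\Bc[D^\alpha,\Bc^{-1}B^0](B^0)^{-1}C^k\del{k}u}$, both bounded by $|t|^{-1}\Xi$. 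Summing over $\alpha$, the pieces of $\Xi$ quadratic in $\norm{\Pbb u}_{H^k}$ all carry a factor $\norm{u}_{H^k}$ and are absorbed by the dissipative term once $\epsilon_0$ is small, \emph{but the term $\norm{\Pbb v}_{L^2}\norm{\Pbb u}_{H^{k-1}}$} (with $v=D^\alpha u$, so $\norm{\Pbb v}_{L^2}\lesssim\norm{\Pbb u}_{H^k}$) \emph{carries no smallness, and this is the main obstacle}. I resolve it with Ehrling's lemma, $\norm{\Pbb u}_{H^{k-1}}\le\epsilon\norm{\Pbb u}_{H^k}+C_\epsilon\norm{\Pbb u}_{L^2}$: the $\epsilon\norm{\Pbb u}_{H^k}^2$ piece is absorbed by dissipation, while the leftover $C_\epsilon\norm{\Pbb u}_{H^k}\norm{\Pbb u}_{L^2}\le\epsilon'\norm{\Pbb u}_{H^k}^2+C_{\epsilon,\epsilon'}\norm{\Pbb u}_{L^2}^2$ is handled by absorbing the first term and controlling $\int_{T_0}^t|\tau|^{-1}\norm{\Pbb u(\tau)}_{L^2}^2\,d\tau$ by the zeroth-order bound above. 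Assembling the $k+1$ estimates gives
\[
\norm{u(t)}_{H^k}^2+\int_{T_0}^t\frac{1}{|\tau|}\norm{\Pbb u(\tau)}_{H^k}^2\,d\tau\le C(\delta,\delta^{-1})\norm{u_0}_{H^k}^2 ,
\]
which is the asserted estimate since $\int_{T_0}^t|\tau|^{-1}=-\int_{T_0}^t\tau^{-1}$. Choosing $\delta$ so small that $C(\delta,\delta^{-1})\delta^2<\epsilon_0^2/4$ closes the bootstrap, so $\norm{u}_{H^k}$ stays bounded on $[T_0,T^*)$ and the continuation principle forces $T^*=0$; uniqueness is the standard $L^2$ estimate for the difference of two solutions, available by symmetric hyperbolicity. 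Finally, one should \emph{not} seek a uniform decay statement as $t\nearrow0$: since only $\Pbb^\perp C^k\Pbb^\perp=0$ is assumed, the off-diagonal block $\Pbb C^k\Pbb^\perp$ generically forces the $\Pbb^\perp$-component of $u$ through $\tfrac1t C^k\del{k}u$, so it merely stays bounded --- precisely the feature distinguishing Theorem \ref{symthm} from Theorem 3.8 of \cite{BOOS:2019}.
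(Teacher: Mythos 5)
Your proposal follows essentially the same approach as the paper's proof: local existence plus a bootstrap, a weighted $L^2$-energy estimate whose sign condition is enforced by the hypothesis $\kappa>\tfrac12\gamma_1(\sum_a\beta_a+2(\lambda_1+\lambda_2))$, an $H^k$-estimate obtained by conjugating through $\Bc D^\alpha\Bc^{-1}$ so the commutators land in the forms of Proposition \ref{Bcommprop}, and the use of Ehrling's lemma to absorb the troublesome $\norm{\Pbb u}_{H^k}\norm{\Pbb u}_{H^{k-1}}$ contribution (with the leftover $\norm{\Pbb u}_{L^2}^2$ controlled by the $L^2$-level dissipation), closing exactly as in the paper. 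One small factual slip in your closing remark: the general coefficient assumptions in Section \ref{coeffassump}(iii) require only that $C^k$ be constant and symmetric, and do \emph{not} assume $\Pbb^\perp C^k\Pbb^\perp=0$ --- that identity (\eqref{Cckprop}) is a property of the concrete Euler matrices $\Cc^k$, not a hypothesis of Theorem \ref{symthm}; the cancellation of the dangerous $\Pbb^\perp$-block in the $C^k$-commutator in Proposition \ref{Bcommprop} actually comes from $\del{k}(\Pbb^\perp\Bc)=0$ together with $C^k$ being constant. This does not affect the validity of your argument.
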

\begin{proof}
Fixing $k\in \Zbb_{>n/2+2}$, we obtain from standard local-in-time existence and uniqueness results for symmetric hyperbolic equations, e.g.
\cite[Ch.16 \S 1]{TaylorIII:1996}, the existence of a unique solution $u \in C^0([T_0,T^*),H^k)\cap C^1([T_0,T^*),H^{k-1})$ to
the IVP \eqref{symivp.1}-\eqref{symivp.2} for some maximal time $T^*\in (T_0,0]$. 
Then taking $R>0$ to be as in Section \ref{coeffassump}, we choose 
initial data such that 
\begin{equation*}
\norm{u(T_0)}_{H^k} < \delta
\end{equation*} 
where
\begin{equation*}
\delta \in (0,\Quarter\Rc) \AND \Rc = \min\bigl\{\frac{3 R}{4 C_{\text{Sob}}}, \frac{3 R}{4}\bigr\}.
\end{equation*}
Then either $\norm{u(t)}_{H^{k}} < \Rc$ for all $t\in [T_0,T^*)$ or there exists a first
time $T_*\in [T_0,T^*)$ such that 
\begin{equation*}
\norm{u(T_*)}_{H^{k}} =  \Rc \leq \begin{textstyle} \frac{3}{4} \end{textstyle} R.
\end{equation*}  
If the first case holds, we set $T_*=T^*$, and in either case, we observe by \eqref{Sobest} that  
\begin{equation} \label{Linfty}
\max\bigl\{\norm{Du(t)}_{L^\infty},\norm{u(t)}_{L^\infty}, \norm{u(t)}_{H^k} \bigr\} 
\leq \begin{textstyle} \frac{3}{4} \end{textstyle} R, \quad T_0\leq t < T_*.
\end{equation}
Next, applying
the differential operator $\Bc D^\alpha \Bc^{-1}$, where $|\alpha| \leq k$, to
\eqref{symivp.1} on the left yields
\begin{equation*}
B^0\del{t} D^\alpha u +\frac{1}{t}(C^i+ B^i)\del{i} D^\alpha u = \frac{1}{t}\Bc D^\alpha \Pbb u - \Bc [D^\alpha,\Bc^{-1} B^0]\del{t}u
-\frac{1}{t}\Bc[D^\alpha,\Bc^{-1}(C^i+ B^i)]\del{i} u + \frac{1}{t}\Bc D^\alpha(\Bc^{-1}F).
\end{equation*}
Using \eqref{symivp.1} to replace $\del{t}u$, we see that the above equation is equivalent to
\begin{align}
B^0\del{t} D^\alpha u + \frac{1}{t}(C^i+ B^i)\del{i} D^\alpha  u = & \frac{1}{t}\Bigl[\Bc\Pbb D^\alpha  u  - \Bc[D^\alpha,\Bc^{-1}B^0](B^0)^{-1}\Bc \Pbb u \Bigr] 
+ \frac{1}{t}\Bc[D^\alpha ,\Bc^{-1}B^0](B^0)^{-1}(C^i+ B^i)\del{i}u  \notag \\
&-\frac{1}{t}\Bc[D^\alpha ,\Bc^{-1}(C^i+ B^i)]\del{i} u
 - \frac{1}{t}\Bc[D^\alpha ,\Bc^{-1}B^0](B^0)^{-1}F 
 + \frac{1}{t}\Bc D^\alpha(\Bc^{-1}F). \label{symthm3.1}
\end{align}

In the following we will use \eqref{symthm3.1} to derive energy estimates that are well-behaved in the limit $t\nearrow 0$.  These energy estimates will be expressed 
in terms of the \text{energy norms} defined by
\begin{equation*}
\nnorm{u}^2_s =\sum_{\ell=0}^s \ip{D^\ell u}{B^0 D^\ell u}.
\end{equation*}
By \eqref{B0BCbnd}, we note that the energy $\nnorm{\cdot}^2_s$ and Sobolev $\norm{\cdot}_{H^k}$ norms are equivalent since they satisfy
\begin{equation*}
\frac{1}{\sqrt{\gamma_1}} \norm{\cdot}_{H^s} \leq \nnorm{\cdot}_s \leq \sqrt{\gamma_{2}} \norm{\cdot}_{H^s}.
\end{equation*}
We will employ this equivalence below without comment.

\bigskip

\noindent \underline{$L^2$-energy estimate:} To obtain a $L^2$-energy estimate for $u$, we set $\alpha=0$ in \eqref{symthm3.1} and then employ the usual energy identity that holds for symmetric hyperbolic equations
to get
\begin{equation} \label{symthm4}
\frac{1}{2} \del{t} \ip{u}{B^0 u} = \frac{1}{t}\ip{u}{\Bc \Pbb u}+\frac{1}{2} \ip{u}{\Div \! B u} + \frac{1}{t}\ip{u}{F},
\end{equation}
where 
\begin{equation*}
\Div\!B = \del{t}(B^0(u))+\frac{1}{t}\del{k}(B^k(u)).
\end{equation*}
Since  $t<0$, we have that
\begin{equation} \label{symthm4a}
\frac{2}{t}\ip{v}{\Bc\Pbb v}=\frac{2}{t}\ip{\Pbb v}{\Bc\Pbb v} \leq \frac{2\kappa}{t}\nnorm{\Pbb v}_0^2
\end{equation}
for any $v\in L^2(\Rbb^N)$ by \eqref{Pbbprop}, \eqref{B0BCbnd} and \eqref{BcPbbcom}. From this inequality and \eqref{Linfty}, Proposition \ref{FdivBprop}, and the energy identity \eqref{symthm4}, we deduce, with the help of  H\"{o}lder's inequality, 
the $L^2$-energy estimate
\begin{equation} \label{symthm5}
\del{t}\nnorm{u}^2_0\leq \frac{\rho_0}{t}\nnorm{\Pbb u}^2_0, \quad T_0\leq t < T_*,
\end{equation}
where 
\begin{equation}  \label{rho0pos}
\rho_0 = 2\kappa -\gamma_1\biggl[\sum_{a=1}^4 \beta_a+2(\lambda_1+\lambda_2)\biggr]>0.
\end{equation}

\bigskip

\noindent \underline{$H^k$-energy estimate:} Applying the $L^2$-energy identity, i.e. \eqref{symthm4}, to \eqref{symthm3.1} gives
\begin{equation} \label{symthm6}
\frac{1}{2} \del{t} \ip{D^\alpha u}{B^0 D^\alpha u} = \frac{1}{t}\ip{D^\alpha u}{\Bc \Pbb D^\alpha u}+\frac{1}{2} \ip{D^\alpha u}{\Div \! B D^\alpha u} + 
\ip{D^\alpha u}{G^\alpha}, \qquad
0\leq |\alpha| \leq k,
\end{equation}
where
\begin{align*}
G^\alpha = & \frac{1}{t}\Bigl(-\Bc[D^\alpha,\Bc^{-1}B^0](B^0)^{-1}\Bc \Pbb u  
+ \Bc[D^\alpha,\Bc^{-1}B^0](B^0)^{-1}(C^i+B^i)\del{i} u \notag \\
&-\Bc[D^\alpha,\Bc^{-1}(C^i+B^i)]\del{i} u
 - \Bc[D^\alpha ,\Bc^{-1}B^0](B^0)^{-1}F 
 + \Bc D^\alpha(\Bc^{-1}F)\Bigr). 
\end{align*}
From \eqref{symthm6}, we obtain, with the help of \eqref{Linfty}, \eqref{symthm4a}, Proposition \ref{FdivBprop} and H\"{o}lder's inequality, the estimate
\begin{align*}
\del{t}\nnorm{D^\alpha u}^2_0\leq& \frac{2\kappa-\gamma_1\beta_1}{t}\nnorm{D^\alpha \Pbb u}^2_0 
-\frac{\gamma_1(\beta_2+\beta_3+\beta_4)}{t}\nnorm{\Pbb u}_k\nnorm{\Pbb u}_{k-1} + 2\ip{D^\alpha u}{G^\alpha},  \qquad T_0\leq t < T_*.
\end{align*}
Using Proposition \ref{Bcommprop}, we bound the last term in the above inequality by
\begin{align*}
 \ip{D^\alpha u}{G^\alpha}\leq & -\frac{1}{t}C(\norm{u}_{H^k})\bigl(\norm{\Pbb u}_{H^k}\norm{\Pbb u}_{H^{k-1}}+\norm{u}_{H^k}\norm{\Pbb u}^2_{H^k}\bigr),
\end{align*}
and so, we have\footnote{The constant $C(\nnorm{u}_{k})$ implicitly depends on the various constants, e.g. $\gamma_1$, $\gamma_2$, $\beta_1$, etc., that
were introduced in the assumption in Section \ref{coeffassump}.}
\begin{align*}
\del{t}\nnorm{D^\alpha u}^2_0\leq& \frac{2\kappa-\gamma_1\beta_1}{t}\nnorm{D^\alpha \Pbb u}^2_0 
-\frac{1}{t}C(\nnorm{u}_{k})\bigl(\nnorm{\Pbb u}_k\nnorm{\Pbb u}_{k-1} + \nnorm{u}_{k}\nnorm{\Pbb u}^2_{k}\bigr),  \qquad T_0\leq t < T_*.
\end{align*}
Summing this over $\alpha$ for $0\leq |\alpha|\leq k$, we obtain, after an application of Young's inequality and Ehrling's lemma (Lemma \ref{Ehrling}),
the $H^k$ energy estimate
\begin{align}
\del{t}\nnorm{u}^2_k &\leq \frac{2\kappa-\gamma_1\beta_1-C(\nnorm{u}_{k})(\ep+\norm{u}_{k})}{t}\nnorm{\Pbb u}^2_k -\frac{1}{t}
c(\nnorm{u}_{k},\epsilon^{-1})\nnorm{\Pbb u}_{0}^2,   \label{symthm9}
\end{align}
which holds for any $\ep > 0$ and $t\in [T_0,T_*)$.

\bigskip

\noindent \underline{Global existence on $[T_0,0)\times \Tbb^n$:} Initially, we have
$\nnorm{u(T_0)}_k \leq \sqrt{\gamma_2}\norm{u(T_0)}_{H^k} < \delta \sqrt{\gamma_2}$, and so,
we can, for $\delta$ satisfying
\begin{equation} \label{deltafix}
0<\delta \leq \min\biggl\{\frac{\Rc}{2\sqrt{\gamma_1\gamma_2}},\frac{\Rc}{4}\bigr\},
\end{equation}
define $T_\delta \in (T_0,T_*)$ to be the first time such that $\nnorm{u(T_\delta)}_k = 2\delta \sqrt{\gamma_2}$, or if such a time does not exist,
set $T_\delta = T^*$. In either case, the inequality
\begin{equation*}
\nnorm{u(t)}_k \leq  2\delta \sqrt{\gamma_2}, \qquad T_0\leq t < T_\delta,
\end{equation*}
holds, which in turn, implies that
\begin{equation*}
\norm{u(t)}_{H^k} \leq \sqrt{\gamma_1} \nnorm{u(t)}_k \leq 2\delta \sqrt{\gamma_1\gamma_2} \leq \Rc, \qquad T_0 \leq t < T_\delta \leq T_*\leq T^*.
\end{equation*}

To proceed, we fix $\ep$ by setting $\ep = \delta\sqrt{\gamma_2}$. Substituting this into \eqref{symthm9} gives
\begin{align}\label{symthm10}
\del{t}\nnorm{u}^2_k \leq \frac{\rho_k}{t}\nnorm{\Pbb u}^2_k & -\frac{1}{t}
c(\delta,\delta^{-1})\nnorm{\Pbb u}_{0}^2, \qquad T_0 \leq t < T_\delta,
\end{align}
where
\begin{equation*}
\rho_k = 2\kappa-\gamma_1\beta_1-C(\delta)\delta.
\end{equation*}
But $\lim_{\delta \searrow 0}C(\delta)\delta = 0$ and $2\kappa-\gamma_1\beta_1 > 0$ by assumption, and consequently, we have
\begin{equation}\label{rhokpos}
\rho_k > 0
\end{equation}
provided $\delta>0$ is chosen small enough.
Furthermore, since $\rho_0> 0$ by \eqref{rho0pos}, we can add 
$\rho^{-1}_0c(\delta,\delta^{-1})$
times \eqref{symthm5} to \eqref{symthm10} to obtain the energy estimate
\begin{align*}
\del{t}\bigl(\nnorm{u}^2_k 
+\rho^{-1}_0c(\delta,\delta^{-1})\nnorm{u}_0^2 \bigr)\leq \frac{\rho_k}{t}\nnorm{\Pbb u}^2_k, \qquad T_0\leq t < T_\delta.
\end{align*}
Setting
\begin{equation} \label{Ekdef}
E_k(t) = \nnorm{u(t)}^2_k 
+\rho^{-1}_0c(\delta,\delta^{-1})\nnorm{u(t)}_0^2 
-\int_{T_0}^t \frac{\rho_k}{\tau}\nnorm{\Pbb u(\tau)}^2_k\, d\tau,
\end{equation}
we can write this energy estimate as
\begin{equation*}
\del{t} E_k \leq 0, \qquad T_0\leq t < T_\delta.
\end{equation*}
Integrating in time gives 
\begin{equation} \label{Eest1}
E_k(t) \leq E_k(0), \qquad T_0\leq t < T_\delta.
\end{equation}

With $\delta$
fixed so that  \eqref{deltafix}
and \eqref{rhokpos} hold, we choose $\delta_0 \in (0,\delta)$ and assume that the initial data is chosen so that
$\norm{u(T_0)}_{H^k} \leq \delta_0$. Then \eqref{Eest1} implies that
\begin{equation} \label{Eest2}
\nnorm{u(t)}_k \leq \sqrt{1+\rho^{-1}_0c(\delta,\delta^{-1})}\delta_0,  \qquad T_0\leq t < T_\delta.
\end{equation}
By further shrinking $\delta_0>0$, if necessary, so that
$0<\sqrt{1+\rho^{-1}_0c(\delta,\delta^{-1})}\delta_0 < \delta\sqrt{\gamma_2}$
also holds, we deduce from \eqref{Eest2} that
$\norm{u(t)}_k < \delta\sqrt{\gamma_2}$ for $T_0\leq t < T_\delta$.
From the definition of $T_\delta$ and the maximality of $T^*$, we conclude that $T_\delta=T_*=T^*=0$. Thus, we have established the global existence of solutions on $[T_0,0)\times \Tbb^n$.
Moreover, from \eqref{Ekdef}, \eqref{Eest1} and the equivalence of the norms $\norm{\cdot}_{H^k}$ and $\nnorm{\cdot}_k$, we see immediately that energy estimate
\begin{equation*} 
 \norm{u(t)}^2_{H^k}  -\int_{T_0}^t \frac{1}{\tau}\norm{\Pbb u(\tau)}^2_{H^k}\, d\tau \leq  C(\delta,\delta^{-1})\norm{u(T_0)}^2_{H^k}
\end{equation*}
holds for $T_0\leq t < 0$.
\end{proof}

\bigskip

\noindent \textit{Acknowledgements:}
This work was partially supported by the Australian Research Council grant DP170100630 and by the Swedish Research Council under grant no. 2016-06596 while the authors were in residence at Institut Mittag-Leffler in Djursholm, Sweden during the winter semester of 2019 as part of the program \textit{General Relativity, Geometry and Analysis: beyond the first 100 years after Einstein}.
We are grateful to the Institute for its support and hospitality during our stay. The author T.A.O would also like to thank the Albert Einstein Institute for its support during a visit in November, 2019 where work on this article was carried out. 
The author Z.W was supported by The Maxwell Institute Graduate School in Analysis and its
Applications, a Centre for Doctoral Training funded by the UK Engineering and Physical
Sciences Research Council (grant EP/L016508/01), the Scottish Funding Council, Heriot-Watt
University and the University of Edinburgh. D.F acknowledges support from the Austrian Science Fund (FWF) through the \emph{Project Geometric transport equations and the non-vacuum Einstein flow} (P 29900-N27).

\appendix

\section{\label{calc}Calculus inequalities}
In this appendix, we collect, for the convenience of the reader, a number of calculus inequalities that we employ in the next appendix. The proof of the following inequalities are well known and may be found, for example, in
the books \cite{AdamsFournier:2003}, \cite{Friedman:1976} and \cite{TaylorIII:1996}.

\begin{thm}{\emph{[H\"{o}lder's inequality]}} \label{Holder}
If $0< p,q,r \leq \infty$ satisfy $1/p+1/q = 1/r$, then
\begin{equation*}
\norm{uv}_{L^r} \leq \norm{u}_{L^p}\norm{v}_{L^q}
\end{equation*}
for all $u\in L^p(\Tbb^{n})$ and $v\in L^q(\Tbb^{n})$.
\end{thm}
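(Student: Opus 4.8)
The plan is to deduce the stated inequality from the classical Hölder inequality (the case $r=1$), which in turn rests on Young's inequality. First I would dispose of the degenerate exponent values. If $r=\infty$, then since $1/p,1/q\geq 0$ and $1/p+1/q=1/r=0$ force $p=q=\infty$, the claim is just the pointwise bound $|u(x)v(x)|\leq \norm{u}_{L^\infty}\norm{v}_{L^\infty}$ for a.e.\ $x\in\Tbb^n$. If exactly one of $p,q$ is infinite, say $p=\infty$, then $q=r$ and the estimate follows from $\int_{\Tbb^n}|uv|^r\,d^nx\leq \norm{u}_{L^\infty}^r\int_{\Tbb^n}|v|^r\,d^nx$ by taking $r$-th roots; the case $q=\infty$ is symmetric. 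Henceforth assume $p,q,r\in(0,\infty)$, and note that $1/p+1/q=1/r$ with $1/p,1/q>0$ then forces $r<p$ and $r<q$.

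Second, I would establish the classical case, i.e.\ $1<p,q<\infty$ with $1/p+1/q=1$. The key ingredient is Young's inequality $ab\leq a^p/p+b^q/q$ for $a,b\geq 0$, which is immediate from convexity of $s\mapsto e^s$ (equivalently, concavity of $\log$). If $\norm{u}_{L^p}=0$ or $\norm{v}_{L^q}=0$ the inequality is trivial, so assume both are positive and, rescaling $u\mapsto u/\norm{u}_{L^p}$ and $v\mapsto v/\norm{v}_{L^q}$, reduce to $\norm{u}_{L^p}=\norm{v}_{L^q}=1$. Applying Young's inequality pointwise and integrating over $\Tbb^n$ gives $\int_{\Tbb^n}|uv|\,d^nx\leq \frac{1}{p}\int_{\Tbb^n}|u|^p\,d^nx+\frac{1}{q}\int_{\Tbb^n}|v|^q\,d^nx=\frac{1}{p}+\frac{1}{q}=1$, which is exactly the claim in normalized form.

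Third, for the general case set $\tilde p=p/r$ and $\tilde q=q/r$; by the remark above $\tilde p,\tilde q\in(1,\infty)$ and $1/\tilde p+1/\tilde q=1$. Applying the classical inequality just proved to $|u|^r\in L^{\tilde p}(\Tbb^n)$ and $|v|^r\in L^{\tilde q}(\Tbb^n)$ yields
\begin{equation*}
\int_{\Tbb^n}|uv|^r\,d^nx\leq \Bigl(\int_{\Tbb^n}|u|^{r\tilde p}\,d^nx\Bigr)^{1/\tilde p}\Bigl(\int_{\Tbb^n}|v|^{r\tilde q}\,d^nx\Bigr)^{1/\tilde q}=\Bigl(\int_{\Tbb^n}|u|^{p}\,d^nx\Bigr)^{r/p}\Bigl(\int_{\Tbb^n}|v|^{q}\,d^nx\Bigr)^{r/q},
\end{equation*}
and taking $r$-th roots gives $\norm{uv}_{L^r}\leq\norm{u}_{L^p}\norm{v}_{L^q}$. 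The argument is entirely routine; there is no genuine obstacle here, the only points requiring minor care being the verification that $\tilde p,\tilde q>1$ in the reduction step (this is where $r<p$ and $r<q$ are used) and the separate bookkeeping for the infinite exponent values, both of which are handled above.
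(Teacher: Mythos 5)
Your proof is correct, and it is the standard textbook argument: dispose of the infinite-exponent cases by pointwise/essential-sup bounds, establish the conjugate case $1/p+1/q=1$ from Young's inequality after normalizing $\norm{u}_{L^p}=\norm{v}_{L^q}=1$, and reduce the general case to the conjugate case by applying it to $|u|^r$ and $|v|^r$ with the conjugate pair $\tilde p = p/r$, $\tilde q = q/r$ (using $r<p$, $r<q$ to ensure $\tilde p,\tilde q>1$). Note, however, that the paper does not give a proof of this statement at all: Theorem~\ref{Holder} sits in Appendix~\ref{calc}, which the authors introduce as a collection of well-known calculus inequalities stated without proof, with pointers to standard references (Adams--Fournier, Friedman, Taylor). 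So there is no paper proof to compare against; your argument is exactly the one those references contain. The only cosmetic gap is that the normalization step also needs $\norm{u}_{L^p}<\infty$ and $\norm{v}_{L^q}<\infty$ (otherwise the right-hand side is $\infty$ and the inequality is vacuous), which you could mention in the same breath as the zero-norm case.
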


\begin{thm}{\emph{[Sobolev's inequality]}} \label{Sobolev} Suppose
$1\leq p < \infty$ and $s\in \Zbb_{> n/p}$. Then
\begin{equation*}
\norm{u}_{L^\infty} \lesssim \norm{u}_{W^{s,p}}
\end{equation*}
for all $u\in W^{s,p}(\Tbb^{n})$.
\end{thm}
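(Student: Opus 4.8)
The plan is to reduce the torus statement to the classical embedding on $\Rbb^n$ via a partition of unity and localization, and then prove the $\Rbb^n$ estimate by Fourier analysis (when $p=2$) or by the standard potential-estimate argument (for general $1\le p<\infty$). Since the excerpt only needs the qualitative bound $\norm{u}_{L^\infty}\lesssim\norm{u}_{W^{s,p}}$ with an unspecified constant, I would not track constants.

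First I would handle the model case on $\Rbb^n$. Given $u\in C_c^\infty(\Rbb^n)$ and $s\in\Zbb_{>n/p}$, write $u = G_s * ((1-\Delta)^{s/2}u)$ using the Bessel potential $G_s$, whose Fourier transform is $(1+|\xi|^2)^{-s/2}$. The key point is that $G_s\in L^{p'}(\Rbb^n)$ precisely when $sp>n$ (equivalently $s>n/p$): near the origin $G_s$ is integrable to any power since it is smooth away from $0$ with at worst an $|x|^{s-n}$-type singularity when $s<n$ (and is bounded or better when $s\ge n$), and at infinity $G_s$ decays exponentially. Hölder's inequality (Theorem \ref{Holder}, with exponents $p',p,\infty$, i.e.\ $1/p'+1/p=1$) then gives
\begin{equation*}
\norm{u}_{L^\infty}\le \norm{G_s}_{L^{p'}}\norm{(1-\Delta)^{s/2}u}_{L^p}\lesssim \norm{u}_{W^{s,p}(\Rbb^n)},
\end{equation*}
where the last step uses the standard equivalence of $\norm{(1-\Delta)^{s/2}u}_{L^p}$ with the Sobolev norm for integer $s$ (via the Mikhlin multiplier theorem, or, for the purposes of an upper bound, simply the crude estimate obtained by bounding the Fourier multiplier taking $\sum_{|\alpha|\le s}D^\alpha u$ to $(1-\Delta)^{s/2}u$). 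For $p=2$ one can bypass multipliers entirely: $\norm{u}_{L^\infty}\le\norm{\hat u}_{L^1}\le\norm{(1+|\xi|^2)^{-s/2}}_{L^2}\norm{(1+|\xi|^2)^{s/2}\hat u}_{L^2}$, and the first factor is finite iff $2s>n$.

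Next I would transfer this to $\Tbb^n$. Identify $\Tbb^n$ with $[0,1]^n$ with opposite faces glued, fix a finite cover by coordinate balls and a subordinate smooth partition of unity $\{\chi_a\}$ with $\sum_a\chi_a\equiv 1$ on $\Tbb^n$. For each $a$, $\chi_a u$ pushes forward to a compactly supported function on $\Rbb^n$; by the Leibniz rule $\norm{\chi_a u}_{W^{s,p}(\Rbb^n)}\lesssim\norm{u}_{W^{s,p}(\Tbb^n)}$ (the implied constant depending only on the fixed cutoffs). Applying the $\Rbb^n$ estimate to each $\chi_a u$ and summing,
\begin{equation*}
\norm{u}_{L^\infty(\Tbb^n)}\le\sum_a\norm{\chi_a u}_{L^\infty(\Rbb^n)}\lesssim\sum_a\norm{\chi_a u}_{W^{s,p}(\Rbb^n)}\lesssim\norm{u}_{W^{s,p}(\Tbb^n)}.
\end{equation*}
Finally, a density argument extends the bound from $C^\infty(\Tbb^n)$ to all of $W^{s,p}(\Tbb^n)$: if $u_j\to u$ in $W^{s,p}$ then $(u_j)$ is Cauchy in $L^\infty$ by the inequality just proved, hence converges uniformly to a continuous representative of $u$, and the inequality passes to the limit.

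\textbf{Main obstacle.} The only real analytic content is the integrability statement $G_s\in L^{p'}(\Rbb^n)\iff sp>n$ (for $p=2$, the finiteness of $\int_{\Rbb^n}(1+|\xi|^2)^{-s}\,d\xi$ when $2s>n$), i.e.\ pinpointing exactly where the hypothesis $s>n/p$ is used; the localization, the product rule, and the density step are all routine. Since the paper states the inequality only with an unspecified constant, even the multiplier-theorem step can be replaced by crude bounds, so no delicate harmonic analysis is strictly required.
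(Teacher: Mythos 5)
The paper itself offers no proof of this statement: Theorem \ref{Sobolev} sits in Appendix \ref{calc}, which is explicitly a list of standard calculus facts quoted from \cite{AdamsFournier:2003}, \cite{Friedman:1976} and \cite{TaylorIII:1996}, so there is no in-paper argument to compare against. Your plan (Bessel-potential representation $u=G_s*(1-\Delta)^{s/2}u$ on $\Rbb^n$, the observation that $G_s\in L^{p'}$ exactly when $sp>n$, then partition of unity to pass to $\Tbb^n$ and density to conclude) is the standard harmonic-analysis proof, and it is sound in outline; in particular you locate correctly where $s>n/p$ enters, and for $p=2$ -- the only case the paper actually uses, since Sobolev's inequality is only ever invoked for $H^k$ norms -- your Plancherel/Cauchy--Schwarz version is complete and elementary. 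The localization and density steps are routine as you say (density is even automatic here, since the paper defines $W^{k,p}(\Tbb^n)$ as the completion of $C^\infty$).

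Two caveats at the stated generality $1\le p<\infty$. First, the step $\norm{(1-\Delta)^{s/2}u}_{L^p}\lesssim\norm{u}_{W^{s,p}}$ cannot be obtained by a ``crude'' pointwise bound on the Fourier symbol: a pointwise majorization of a multiplier yields an $L^p\to L^p$ bound only for $p=2$, so for $1<p<\infty$, $p\neq 2$ (and odd $s$, where $(1-\Delta)^{s/2}$ is not a differential operator) you genuinely need the Mikhlin--H\"ormander theorem or an equivalent kernel estimate; that is fine, but it is the substantive ingredient, not an optional refinement. Second, the endpoint $p=1$ is included in the statement but is not covered by this route: Mikhlin fails at $p=1$, and decomposing $(1+|\xi|^2)^{s/2}$ against the monomials $\xi^\alpha$ produces Riesz-transform-type multipliers such as $\xi_j(1+|\xi|^2)^{-1/2}$ that are not bounded on $L^1$ (the spaces $W^{s,1}$ and the Bessel-potential space do not coincide), so the argument does not go through there. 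The fix is easy but different: for $p=1$ the hypothesis forces $s\ge n+1$, and for each compactly supported piece $\chi_a u$ the fundamental theorem of calculus gives $\norm{\chi_a u}_{L^\infty}\le\norm{\partial_{1}\cdots\partial_{n}(\chi_a u)}_{L^1}\lesssim\norm{u}_{W^{n,1}(\Tbb^n)}$ directly. Alternatively, the textbook route followed by the cited references (iterating the Gagliardo--Nirenberg--Sobolev embedding $W^{1,q}\subset L^{q^*}$ until the integrability exponent exceeds $n$, then finishing with Morrey's inequality) treats all $1\le p<\infty$ uniformly and avoids multiplier theory altogether.
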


\begin{thm}{\emph{[Product and commutator estimates]}} \label{calcpropB} $\;$

\begin{enumerate}[(i)]
\item
Suppose $1\leq p_1,p_2,q_1,q_2\leq \infty$, $s\in \Zbb_{\geq 1}$, $|\alpha|=s$ and
\begin{equation*}
\frac{1}{p_1}+\frac{1}{p_2} = \frac{1}{q_1} + \frac{1}{q_2} = \frac{1}{r}.
\end{equation*}
Then
\begin{align*}
\norm{D^\alpha (uv)}_{L^r} \lesssim \norm{u}_{W^{s,p_1}}\norm{v}_{L^{q_1}} + \norm{u}_{L^{p_2}}\norm{v}_{W^{s,q_2}} \label{clacpropB.2.1}
\intertext{and}
\norm{[D^\alpha,u]v}_{L^r} \lesssim \norm{D u}_{L^{p_1}}\norm{v}_{W^{s-1,q_1}} + \norm{D u}_{
W^{s-1,p_2}}\norm{v}_{L^{q_2}}
\end{align*}
for all $u,v \in C^\infty(\Tbb^{n})$.
\item[(ii)]  Suppose $s_1,s_2,s_3\in \Zbb_{\geq 0}$, $\;s_1,s_2\geq s_3$,  $1\leq p \leq \infty$, and $s_1+s_2-s_3 > n/p$. Then
\begin{equation*}
\norm{uv}_{W^{s_3,p}} \lesssim \norm{u}_{W^{s_1,p}}\norm{v}_{W^{s_2,p}}
\end{equation*}
for all $u\in W^{s_1,p}(\Tbb^{n})$ and $v\in W^{s_2,p}(\Tbb^{n})$.
\end{enumerate}
\end{thm}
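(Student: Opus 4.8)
My plan is to derive all three estimates from the Leibniz rule, H\"older's inequality (Theorem~\ref{Holder}), the Sobolev embeddings (Theorem~\ref{Sobolev}), and the Gagliardo--Nirenberg interpolation inequalities. On $\Tbb^n$ the latter take the convenient form $\norm{D^j w}_{L^\rho}\lesssim\norm{w}_{W^{m,a}}^{j/m}\norm{w}_{L^b}^{1-j/m}$ for $0<j<m$ whenever $\tfrac1\rho=\tfrac jm\tfrac1a+(1-\tfrac jm)\tfrac1b$ (away from the well-known exceptional exponent configurations), and I would use them freely together with Young's inequality in the form $X^\theta Y^{1-\theta}\le X+Y$ for $\theta\in[0,1]$. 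Parts~(i) and~(ii) are independent, and I would prove them separately.

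For the product estimate in~(i), I would expand $D^\alpha(uv)=\sum_{\beta\le\alpha}\binom{\alpha}{\beta}(D^\beta u)(D^{\alpha-\beta}v)$ by Leibniz (finitely many terms, since $|\alpha|=s$), and for the summand indexed by $\beta$ with $|\beta|=j$ apply H\"older with the conjugate pair $(\rho_u,\rho_v)$, $\tfrac1{\rho_u}+\tfrac1{\rho_v}=\tfrac1r$, determined by $\tfrac1{\rho_u}=\tfrac js\tfrac1{p_1}+(1-\tfrac js)\tfrac1{p_2}$ and $\tfrac1{\rho_v}=\tfrac{s-j}s\tfrac1{q_2}+\tfrac js\tfrac1{q_1}$; these add up to $\tfrac1r$ by the two exponent hypotheses together with $1-\tfrac js=\tfrac{s-j}s$. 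Gagliardo--Nirenberg then bounds $\norm{D^\beta u}_{L^{\rho_u}}$ by $\norm{u}_{W^{s,p_1}}^{j/s}\norm{u}_{L^{p_2}}^{1-j/s}$ and $\norm{D^{\alpha-\beta}v}_{L^{\rho_v}}$ by $\norm{v}_{W^{s,q_2}}^{(s-j)/s}\norm{v}_{L^{q_1}}^{j/s}$, so the summand is controlled by $(\norm{u}_{W^{s,p_1}}\norm{v}_{L^{q_1}})^{j/s}(\norm{u}_{L^{p_2}}\norm{v}_{W^{s,q_2}})^{(s-j)/s}$, which Young's inequality with $\theta=j/s$ collapses into the desired right-hand side; summing over $\beta$ finishes the first inequality. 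For the commutator estimate, I would use that $[D^\alpha,u]v=\sum_{0<\beta\le\alpha}\binom{\alpha}{\beta}(D^\beta u)(D^{\alpha-\beta}v)$ carries $|\beta|\ge1$ derivatives on $u$ in every summand, so writing $D^\beta u=D^{\beta'}(\del{m}u)$ with $|\beta'|=|\beta|-1\le s-1$ and running the same argument---now interpolating $|\beta|-1$ derivatives of $Du$ out of $s-1$ and $s-|\beta|$ derivatives of $v$ out of $s-1$, with parameters $\theta_1=\tfrac{|\beta|-1}{s-1}$, $\theta_2=\tfrac{s-|\beta|}{s-1}$ satisfying $\theta_1+\theta_2=1$---yields $\norm{Du}_{L^{p_1}}\norm{v}_{W^{s-1,q_1}}+\norm{Du}_{W^{s-1,p_2}}\norm{v}_{L^{q_2}}$ after Young's inequality. (When $s=1$ the commutator is simply $(\del{m}u)v$ and the estimate is immediate from H\"older.)

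For~(ii), I would estimate $\norm{D^\gamma(uv)}_{L^p}$ for each $|\gamma|\le s_3$ by Leibniz and H\"older: a typical summand $(D^\beta u)(D^{\gamma-\beta}v)$ has $|\beta|=j$, $|\gamma-\beta|=\ell$, $j+\ell=|\gamma|\le s_3$, and I would bound it by $\norm{D^\beta u}_{L^{p_u}}\norm{D^{\gamma-\beta}v}_{L^{p_v}}$ with $\tfrac1{p_u}+\tfrac1{p_v}=\tfrac1p$, choosing $p_u,p_v$ compatibly with the Sobolev embeddings $W^{s_1,p}\hookrightarrow W^{j,p_u}$ and $W^{s_2,p}\hookrightarrow W^{\ell,p_v}$. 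These require $\tfrac1{p_u}\ge\tfrac1p-\tfrac{s_1-j}{n}$ and $\tfrac1{p_v}\ge\tfrac1p-\tfrac{s_2-\ell}{n}$, with $p_u$ (resp.\ $p_v$) permitted to equal $\infty$ once $s_1-j$ (resp.\ $s_2-\ell$) exceeds $n/p$. The total surplus $(s_1-j)+(s_2-\ell)=s_1+s_2-|\gamma|\ge s_1+s_2-s_3>n/p$ is exactly what makes such a split possible: distribute the surplus between the two factors, pushing one of them into $L^\infty$ whenever its individual surplus already exceeds $n/p$. Summing over the finitely many $\beta$ and $\gamma$ then gives $\norm{uv}_{W^{s_3,p}}\lesssim\norm{u}_{W^{s_1,p}}\norm{v}_{W^{s_2,p}}$.

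The only genuine work is the bookkeeping at borderline exponents, which I expect to be the main obstacle. In~(ii), if a Sobolev embedding is exactly critical ($s_i-j=n/p$), the endpoint exponent is unavailable and one must accept an arbitrarily small loss of integrability instead, for which the strict inequality $s_1+s_2-s_3>n/p$ leaves room. In~(i), at extreme values of $p_1,p_2,q_1,q_2$ (i.e.\ $1$ or $\infty$) the Gagliardo--Nirenberg step can land in an exceptional configuration and the few affected summands must be estimated directly by H\"older---much as the extreme indices $|\beta|=0$ and $|\beta|=s$ are already handled by plain H\"older in any case. Since the entire argument is routine, the statement is quoted from \cite{AdamsFournier:2003,Friedman:1976,TaylorIII:1996} rather than proved in the paper.
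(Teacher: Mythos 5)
The paper offers no proof of this theorem; it states the inequalities and refers the reader to \cite{AdamsFournier:2003,Friedman:1976,TaylorIII:1996}, which you correctly observe at the end of your sketch. Your route---Leibniz expansion, H\"older on each summand, Gagliardo--Nirenberg interpolation, and Young's inequality for~(i), and Leibniz plus a distribution of Sobolev ``surplus'' for~(ii)---is indeed the standard argument one finds in those references, and part~(ii) in particular is handled cleanly, including the observation that the strict inequality $s_1+s_2-s_3>n/p$ is what absorbs the borderline cases.

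There is, however, a genuine arithmetic gap in your treatment of~(i). You set $\tfrac1{\rho_u}=\tfrac js\tfrac1{p_1}+(1-\tfrac js)\tfrac1{p_2}$ and $\tfrac1{\rho_v}=\tfrac{s-j}s\tfrac1{q_2}+\tfrac js\tfrac1{q_1}$ and assert that their sum equals $\tfrac1r$ ``by the two exponent hypotheses.'' Writing $\theta=j/s$, the sum is
\begin{equation*}
\frac1{\rho_u}+\frac1{\rho_v}=\theta\Bigl(\frac1{p_1}+\frac1{q_1}\Bigr)+(1-\theta)\Bigl(\frac1{p_2}+\frac1{q_2}\Bigr),
\end{equation*}
which equals $\tfrac1r$ for every $j$ precisely when $\tfrac1{p_1}+\tfrac1{q_1}=\tfrac1{p_2}+\tfrac1{q_2}=\tfrac1r$. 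The theorem as printed instead requires $\tfrac1{p_1}+\tfrac1{p_2}=\tfrac1{q_1}+\tfrac1{q_2}=\tfrac1r$, which does not yield the needed identity unless, say, all four exponents coincide. In fact the endpoint terms of the Leibniz sum ($j=0$ and $j=s$) already force the pairings $\tfrac1{p_1}+\tfrac1{q_1}=\tfrac1r$ and $\tfrac1{p_2}+\tfrac1{q_2}=\tfrac1r$ by plain H\"older, and one can check the inequality fails outright at degenerate choices such as $p_1=q_1=1$, $p_2=q_2=\infty$, $r=1$. The conclusion is that the hypothesis in the paper is a typo (the standard statements in the cited references use the pairing $\tfrac1{p_1}+\tfrac1{q_1}=\tfrac1{p_2}+\tfrac1{q_2}=\tfrac1r$, and that is what the paper uses in practice), and your argument is correct for the corrected hypothesis; but as written, your justification of the exponent arithmetic does not follow from the stated assumptions, and you should have flagged the mismatch rather than asserting it goes through.
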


\begin{thm}{\emph{[Moser's estimates]}}  \label{calcpropC}
Suppose  $1\leq p \leq \infty$, $s\in \Zbb_{\geq 1}$, $0\leq k\leq s$, $|\alpha|=k$ and $f\in C^s(U)$, where
$U$ is open and bounded in $\Rbb$ and contains $0$, and $f(0)=0$. Then
\begin{equation*}
\norm{D^\alpha f(u)}_{L^{p}} \leq C\bigl(\norm{f}_{C^s(\overline{U})}\bigr)(1+\norm{u}^{s-1}_{L^\infty})\norm{u}_{W^{s,p}}
\end{equation*}
for all $u \in C^0(\Tbb^{n})\cap L^\infty(\Tbb^{n})\cap W^{s,p}(\Tbb^{n})$ with
$u(x) \in U$ for all $x\in \Tbb^{n}$.
\end{thm}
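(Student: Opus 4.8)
The plan is to reduce the desired bound, via the Fa\`a di Bruno higher chain rule, to estimating products of derivatives of $u$ of total order $k$, and then to control each such product by the Gagliardo--Nirenberg interpolation inequality together with H\"older's inequality. An equivalent route would be an induction on $k$ based on the first-order chain rule $\del{i}(f(u))=f'(u)\del{i}u$ and the product estimate of Theorem~\ref{calcpropB}, but the Fa\`a di Bruno expansion makes the exponent bookkeeping most transparent, so I would carry it out that way.

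Concretely, I would proceed as follows. The case $k=0$ is immediate: since $f(0)=0$ and $s\geq1$, the fundamental theorem of calculus along the segment from $0$ to $u(x)$ gives the pointwise bound $|f(u(x))|\leq \norm{f}_{C^s(\overline U)}|u(x)|$, hence $\norm{f(u)}_{L^p}\leq\norm{f}_{C^s(\overline U)}\norm{u}_{W^{s,p}}$, which already has the asserted form since $1+\norm{u}_{L^\infty}^{s-1}\geq1$. For $1\leq k\leq s$ and $|\alpha|=k$, the Fa\`a di Bruno formula writes $D^\alpha f(u)$ as a finite linear combination, with universal combinatorial coefficients, of terms
\begin{equation*}
f^{(j)}(u)\prod_{i=1}^{j}D^{\beta_i}u,\qquad 1\leq j\leq k,\quad |\beta_i|\geq1,\quad \sum_{i=1}^{j}|\beta_i|=k.
\end{equation*}
Because $1\leq j\leq k\leq s$ and $u(x)\in U$ for every $x$, each coefficient obeys $\norm{f^{(j)}(u)}_{L^\infty}\leq\norm{f}_{C^s(\overline U)}$, so it suffices to bound $\norm{\prod_{i=1}^{j}D^{\beta_i}u}_{L^p}$ for each admissible tuple $(\beta_1,\dots,\beta_j)$. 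Here I would pick exponents $p_i$ via $1/p_i=|\beta_i|/(kp)$ (with $p_i=\infty$ when $p=\infty$), so that $\sum_i 1/p_i=1/p$; an iterated application of H\"older's inequality (Theorem~\ref{Holder}) then gives $\norm{\prod_i D^{\beta_i}u}_{L^p}\leq\prod_i\norm{D^{\beta_i}u}_{L^{p_i}}$, and the Gagliardo--Nirenberg interpolation inequality on $\Tbb^n$ supplies $\norm{D^{\beta_i}u}_{L^{p_i}}\lesssim\norm{u}_{L^\infty}^{1-|\beta_i|/k}\norm{u}_{W^{k,p}}^{|\beta_i|/k}$. Multiplying these over $i$ and using $\sum_i|\beta_i|=k$ collapses the product to $\norm{u}_{L^\infty}^{j-1}\norm{u}_{W^{k,p}}$. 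Finally, since $0\leq j-1\leq s-1$ one has $\norm{u}_{L^\infty}^{j-1}\leq1+\norm{u}_{L^\infty}^{s-1}$, and $\norm{u}_{W^{k,p}}\leq\norm{u}_{W^{s,p}}$; absorbing the finitely many combinatorial constants and the $C^s$-norm of $f$ into a single constant $C(\norm{f}_{C^s(\overline U)})$ yields the claimed estimate.

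The only ingredient I expect to be more than routine bookkeeping is the Gagliardo--Nirenberg inequality with precisely these interpolation exponents, in particular the endpoint case $p=\infty$; this is classical on the torus and can be quoted from \cite{TaylorIII:1996} or \cite{AdamsFournier:2003}. Two minor points worth tracking are: only derivatives $f^{(j)}$ with $j\leq k\leq s$ enter, so the hypothesis $f\in C^s$ is exactly what guarantees every coefficient $f^{(j)}(u)$ is bounded; and $u(\Tbb^n)$ is a compact subset of $U$, so evaluating $f$ and its derivatives along $u$ (and passing to $\overline U$ in the constant) causes no difficulty at $\partial U$.
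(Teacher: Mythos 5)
The paper itself offers no proof of this lemma (it is quoted from the cited textbooks), and your argument is exactly the standard proof found there: Fa\`a di Bruno, H\"older with exponents $p_i=kp/|\beta_i|$, and Gagliardo--Nirenberg interpolation between $L^\infty$ and $W^{k,p}$, with the correct exponent bookkeeping giving $\norm{u}_{L^\infty}^{j-1}\norm{u}_{W^{k,p}}$; so the proposal is correct and essentially the same approach. Two small points to keep in mind: in the $k=0$ step the bound $|f(u(x))|\leq \norm{f}_{C^s(\overline U)}|u(x)|$ needs the segment from $0$ to $u(x)$ to lie in $\overline U$, which is automatic when $U$ is an interval (as in the paper's application, where $u$ takes values in a ball), and the Gagliardo--Nirenberg inequality must be invoked in its periodic form with the inhomogeneous norm $\norm{u}_{W^{k,p}}$ on the right, which is the version you state.
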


\begin{lem} {\emph{[Ehrling's lemma]}} \label{Ehrling}
Suppose $1\leq p < \infty$, $s_0,s,s_1\in \Zbb_{\geq 0}$, and $s_0 < s < s_1$. Then for any $\epsilon>0$ there exists a constant $C=C(\epsilon^{-1})$ such
that
\begin{equation*}
\norm{u}_{W^{s,p}} \leq \epsilon \norm{u}_{W^{s_1,p}} + C\norm{u}_{W^{s_0,p}}
\end{equation*}
for all $u\in W^{s_1,p}(\Tbb^{n})$.
\end{lem}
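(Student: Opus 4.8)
The plan is to derive Lemma \ref{Ehrling} from the abstract form of Ehrling's lemma, namely a compactness-plus-contradiction argument built on the two embeddings $W^{s_1,p}(\Tbb^n)\hookrightarrow W^{s,p}(\Tbb^n)$ (compact) and $W^{s,p}(\Tbb^n)\hookrightarrow W^{s_0,p}(\Tbb^n)$ (continuous). The second embedding is immediate from the definition of the $W^{k,p}$-norm: since $s_0<s$ and $p<\infty$, the sum defining $\norm{u}_{W^{s,p}}$ contains every (nonnegative) term appearing in $\norm{u}_{W^{s_0,p}}$, so $\norm{u}_{W^{s_0,p}}\leq\norm{u}_{W^{s,p}}$ for all $u$, i.e.\ the inclusion is norm-decreasing and hence bounded. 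For the first embedding I would invoke the Rellich--Kondrachov theorem on the compact manifold $\Tbb^n$: each inclusion $W^{m+1,p}(\Tbb^n)\hookrightarrow W^{m,p}(\Tbb^n)$ is compact, since a sequence bounded in $W^{m+1,p}$ has $u_j$ together with all $D^\alpha u_j$, $|\alpha|\leq m$, bounded in $W^{1,p}(\Tbb^n)$, and $W^{1,p}(\Tbb^n)\hookrightarrow L^p(\Tbb^n)$ is compact (the subcritical case of Rellich--Kondrachov, as $p<p^\ast$ always); a diagonal extraction then produces a subsequence converging in $W^{m,p}$. Composing the finitely many compact inclusions from order $s_1$ down to order $s$ — there is at least one, since $s<s_1$ — gives that $W^{s_1,p}(\Tbb^n)\hookrightarrow W^{s,p}(\Tbb^n)$ is compact. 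If a self-contained argument on the torus is preferred, one can instead use a Fourier-frequency cutoff $u=P_Nu+(\id-P_N)u$: the low-frequency part lies in a fixed finite-dimensional space, while $\norm{(\id-P_N)u}_{W^{s,p}}\lesssim N^{s-s_1}\norm{u}_{W^{s_1,p}}$ (clean for $p=2$ via Plancherel, and requiring a multiplier estimate for general $p$), so citing Rellich--Kondrachov is the most economical route.

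With the two embeddings in hand, I would argue by contradiction. Suppose the conclusion fails: there exist $\epsilon_0>0$ and a sequence $u_j\in W^{s_1,p}(\Tbb^n)$ with
\begin{equation*}
\norm{u_j}_{W^{s,p}} > \epsilon_0\norm{u_j}_{W^{s_1,p}} + j\,\norm{u_j}_{W^{s_0,p}}.
\end{equation*}
In particular $u_j\neq 0$, and since all three norms are homogeneous of degree one in $u_j$ the inequality is scale-invariant, so we may normalise $\norm{u_j}_{W^{s,p}}=1$. The inequality then forces $\norm{u_j}_{W^{s_1,p}}<\epsilon_0^{-1}$, so $(u_j)$ is bounded in $W^{s_1,p}(\Tbb^n)$, and $\norm{u_j}_{W^{s_0,p}}<1/j\to 0$. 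By compactness of $W^{s_1,p}\hookrightarrow W^{s,p}$, a subsequence (not relabelled) satisfies $u_j\to u$ in $W^{s,p}(\Tbb^n)$, whence $\norm{u}_{W^{s,p}}=1$. But by continuity of $W^{s,p}\hookrightarrow W^{s_0,p}$ we also have $u_j\to u$ in $W^{s_0,p}(\Tbb^n)$, and combined with $\norm{u_j}_{W^{s_0,p}}\to 0$ this gives $\norm{u}_{W^{s_0,p}}=0$, i.e.\ $u=0$, contradicting $\norm{u}_{W^{s,p}}=1$. This proves the lemma, producing the constant $C=C(\epsilon^{-1})$ non-constructively, exactly as stated.

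The only genuine obstacle is the compact embedding $W^{s_1,p}(\Tbb^n)\hookrightarrow W^{s,p}(\Tbb^n)$; the rest of the argument is formal. As an alternative that bypasses the abstract lemma, one could instead start from the Gagliardo--Nirenberg interpolation inequality $\norm{u}_{W^{s,p}}\lesssim\norm{u}_{W^{s_1,p}}^{\theta}\norm{u}_{W^{s_0,p}}^{1-\theta}$ with $\theta=(s-s_0)/(s_1-s_0)\in(0,1)$, and then apply Young's inequality in the form $a^\theta b^{1-\theta}\leq\eta a+C(\eta^{-1})b$ (valid for $a,b\geq 0$, $\eta>0$): choosing $\eta$ small enough to absorb the implicit constant yields the stated bound with an explicit, $\epsilon$-dependent $C$. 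Either route suffices; since Lemma \ref{Ehrling} is used only qualitatively in the proof of Theorem \ref{symthm}, the compactness argument is the more direct choice.
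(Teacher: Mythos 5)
Your compactness-plus-contradiction argument is correct and is precisely the standard proof of Ehrling's lemma; the paper itself offers no proof, deferring to the cited textbooks (Adams--Fournier, Friedman, Taylor), where this is the argument given. The only substantive ingredient, compactness of the embedding $W^{s_1,p}(\Tbb^{n})\hookrightarrow W^{s,p}(\Tbb^{n})$, is correctly reduced to Rellich--Kondrachov on the torus, and the normalization/contradiction step is carried out properly, so nothing further is needed.
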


\bibliographystyle{amsplain}
\bibliography{fsMilne_v3}

\end{document}